\documentclass{article}
\usepackage[dvips]{graphicx}
\usepackage{amsmath}
\usepackage{amsfonts}
\usepackage{amssymb}
\usepackage{amsthm}
\usepackage{mathtools}

\usepackage{multirow}
\usepackage{array}
\usepackage{tabularx}
\usepackage{mathrsfs}
\usepackage{url}

\mathtoolsset{showonlyrefs=true}

\newtheorem{theorem}{Theorem}[section]
\newtheorem{assumption}{Assumption}[section]
\newtheorem{lemma}{Lemma}[section]

\newtheorem{proposition}{Proposition}[section]

\theoremstyle{definition}
\newtheorem{remark}{Remark}[section]
\newtheorem{definition}{Definition}[section]
\newtheorem{example}{Example}[section]

\makeatletter

\@addtoreset{equation}{section}
\makeatother

\begin{document}

\title{Contingent Claim Valuation under
Increasing Profit, Strong Arbitrage, and Arbitrage of the First Kind\thanks{We would like to thank Professor Yoshiki Otobe for helpful discussions. This study was supported by JSPS KAKENHI, Grant Number 23K01466.}}
\author{Yukihiro Tsuzuki\thanks{Faculty of Economics and Law, Shinshu University, 3-1-1 Asahi, Matsumoto, Nagano 390-8621, Japan; yukihirotsuzuki@shinshu-u.ac.jp}}

\maketitle

\abstract{
We study the upper hedging price for contingent claims in market models with strong types of arbitrage: increasing profit, strong arbitrage, and arbitrage of the first kind.
The existence of arbitrage may make the price smaller than if it did not exist.
For example, when the asset price process has a reflecting boundary,
which introduces increasing profit in the market model,
the option prices are reduced to those of the corresponding options that knock-out at the boundary.
Furthermore, we demonstrate that corporate stock price processes with increasing profit are obtained as a result of corporate stock issuance and repurchase plans.
}

\clearpage
\section{Introduction}
\label{sec:introduction}
Contingent claim valuation is one of the most fundamental questions in mathematical finance.
This question has been studied under several conditions to exclude arbitrage opportunities.
Here, an arbitrage opportunity is the possibility of making a profit without risk.
Various forms of arbitrage have been proposed in the literature
and problems in mathematical finance has been considered under strong no-arbitrage conditions and then weaker ones.

We shortly review the literature on no-arbitrage theory (see Fontana \cite{Fontana2015}
for a unified analysis of no-arbitrage conditions for continuous semimartingale models).
The seminal study of Delbaen and Schachermayer \cite{Delbaen1994} established the equivalence between the {\it No Free Lunch with Vanishing Risk (NFLVR)} condition
and the existence of an {\it Equivalent Local Martingale Measure (ELMM)}.
Notwithstanding the significance of this result, more recently, the NFLVR condition has been considered as rather strong. 
This condition is not necessarily assumed for contingent claim valuation, portfolio optimization, and so on,
and weaker no-arbitrage conditions had been proposed.
For example, {\it Stochastic Portfolio Theory} is a framework for analyzing portfolio behavior and equity market structure,
without postulating the existence of ELMMs (see Karatzas and Fernholz \cite{KaratzasFernholz2009} for an overview).
The {\it Benchmark Approach} developed a theory of contingent claim valuation, 
taking the growth optimal portfolio as the num\'{e}raire or benchmark,
also without relying on the existence of ELMMs (see Platen and Heath \cite{platen2006benchmark} for a detailed account).
Karatzas and Kardaras \cite{Karatzas2007} showed that 
the existence of a num\'{e}raire portfolio is equivalent to {\it No Unbounded Profit with Bounded Risk (NUPBR)},
which is weaker than the NFLVR condition.
The NUPBR condition has been found to be equivalent to {\it No Arbitrage of the First Kind (NA1)} (Kardaras \cite[Proposition 1]{Kardaras2010}),
and to the existence of equivalent supermartingale deflators (Kardaras \cite{Kardaras2012}).
{\it No Strong Arbitrage (NSA)}
and {\it No Increasing Profit (NIP)}
are weaker than the abovementioned conditions.
They excluded only arbitrage opportunities that yield a positive profit without requiring any initial investment or any amount of credit.
Generally, the NIP and NSA conditions are used only to exclude almost pathological notions of arbitrage,
while the NA1 condition is equivalent to an economically sound notion of market viability and helps resolve the fundamental problems of portfolio optimization, pricing, and hedging (see Fontana \cite[Section 8]{Fontana2015}).

Financial market models that violate the NIP condition have been proposed to consider problems in mathematical finance.
For example, Jarrow and Protter \cite{Jarrow2005} demonstrated that the existence of large traders, such as government or central banks, introduces reflection boundaries to asset prices in markets. 
Neuman and Schied \cite{NeumanSchied2016} studied optimal strategies in target zone models,
in which the price is modeled by a diffusion process with reflection boundaries,
such as a reflected arithmetic or geometric Brownian motion.
However, the theory of contingent claim valuation in the presence of increasing profit has not been well established.
Several authors, such as Veestraeten \cite{Veestraeten2008}, have proposed pricing formulae for a reflected geometric Brownian motion
under the mistaken belief that an ELMM exists.
Rossello \cite{Rossello2012} and Buckner et~al. \cite{Buckner2022} pointed out that a skew Brownian motion and a reflected geometric Brownian motion give rise to arbitrage opportunities, respectively. However,
they did not propose alternative pricing methodologies.
Jarrow and Protter \cite{Jarrow2005} discussed the completeness of the markets with increasing profits,
but their replicating strategies were not necessarily optimal.

The present study makes two contributions.

First, it develops a theory of contingent claim valuation
in financial market models that violate the NIP, NSA, and NA1 conditions.
In line with the literature,
this study defines
the price of a contingent claim as the smallest initial fortune to superreplicate the claim; that is, the {\it upper hedging price}.
The superreplication is possible under a predictable representation property even if arbitrage exists,
as Jarrow and Protter \cite{Jarrow2005} pointed out, assuming the uniqueness of ELMM.
Our ingredient is to find the smallest fortune for superreplication.
A typical case is as follows.
A trading strategy is carried out as in market models without arbitrage until arbitrage occurs, or otherwise, until maturity.
When arbitrage occurs, the fortune for superreplication thereafter can be raised using the arbitrage.
This allows the wealth process of the strategy to be zero at that time.
For example,
if the stock price process follows a reflected geometric Brownian motion,
the smallest fortune for replicating a call option is the Black--Scholes price
for the call option that knocks out at the reflection level.
Even with some exceptions,
the pricing formula is given as the expectation with respect to the physical probability measure
of the payoff multiplied by a kind of a weak martingale deflator
that vanishes when arbitrage occurs.
However, the presence of arbitrage does not necessarily reduce the initial fortune.

The second contribution of this study is that it models a stock price process incorporating effect of a large trader.
In the literature, for a market in which there is artificial price support by a large trader,
financial market models with singular terms have been used.
For example, Jarrow and Protter \cite{Jarrow2005} considered the stock price of a firm that was spun off from a large company
assuming that this large company offers a share purchase or sells at a fixed level.
The resultant price process has the local time at that level, which yields increasing profit.
The authors called these arbitrage opportunities ``hidden,'' because they occur during a small set of times of Lebesgue measure zero.
Our approach is different.
We assume that an agent in a corporate issues or repurchases its corporate stocks.
We model the number of the stocks,
put a constraint on wealth processes of market participants,
and derive a stock price process as a semimartingale with a singular term that comes from the number of the stocks.
Furthermore, we propose models for which increasing profit can be hidden from agents who observe the stock price process only.
More precisely, if the filtration generated by the stock price only
is strictly smaller than the original filtration,
increasing profit cannot be observed by agents whose information is represented with the smaller filtration.
A typical example is a case in which the stock price process follows the three-dimensional Bessel process.
This model satisfies the NA1 condition, which is stronger than the NIP condition, if the filtration is generated by the stock price.
However, according to Pitman's theorem,
the three-dimensional Bessel process is decomposed into a Brownian motion
and its future infimum process, which yields increasing profit.
Our interpretation is as follows:
an agent of a corporate repurchases the stocks,
which makes the stock price singularly increase;
however, this effect cannot be observed by market participants who observe the stock price only.
The proposed funding strategies can yield an arbitrage, as in this example,
and can avoid or lead to {\it bankruptcy}, 
which is defined as the event that the corporate stock is worthless.
We investigate these conditions with and without funding strategies.

In the following section, we introduce the notation and review the literature on arbitrage and contingent claim valuation.
In Section \ref{sec:ccv_under_arb}, we study contingent claim valuation in market models with increasing profit, strong arbitrage, and arbitrage of the first kind.
These forms of arbitrage are discussed in Sections \ref{sec:ip}, \ref{sec:sa}, \ref{sec:a1}, respectively.
Theorem \ref{thm:main} is the main theorem in this study, whose proof is in Section \ref{sec:proof}.
We model stock price processes incorporating funding activities in Section \ref{sec:funding}.
We demonstrate that the singular term appears in the semimartingale decomposition of the stock price process through the number of the stocks in Section \ref{sec:zero-sum}.
This singular term can be hidden from market participants who observe the stock price only in Section \ref{sec:fs_nip}.
Finally, Section \ref{sec:conclusion} concludes.

\section{Preliminaries}
\subsection{Financial market model}
Let $(\Omega,\mathcal{F},(\mathcal{F}_{t})_{t \ge 0},P)$ be a filtered probability space,
where the filtration $(\mathcal{F}_{t})_{t \ge 0}$ is assumed to satisfy the usual conditions of right-continuity and $P$-completeness.
We consider a financial market comprising $d+1$ assets with an infinite time horizon.
One of the assets is a savings account and we assume that its price is constant; that is, the interest rate is zero.
The other asset prices are described by the $\mathbb{R}_{+}^{d}$-valued process $S=(S_{t})_{t \ge 0}$ with $S_{t}=(S_{t}^{1},\dots,S_{t}^{d})^{\top}$, where $\mathbb{R}_{+}:=[0,\infty)$ and ${}^{\top}$ denotes transposition.
We denote by $(1,S)$ the financial market model comprising these $d+1$ assets.
Furthermore, the price process $S$ is assumed to be a continuous $\mathbb{R}_{+}^{d}$-valued semimartingale on $(\Omega,\mathcal{F},(\mathcal{F}_{t}),P)$ with semimartingale decomposition
\begin{equation}
S_{t} = S_{0} + M_{t} + A_{t},\label{eq:underlying}
\end{equation}
where $M$ is a $\mathbb{R}^{d}$-valued process with each component a local martingale
and
$A$ is a continuous $\mathbb{R}^{d}$-valued predictable process of finite variation with $M_{0}=A_{0}=0$.
Based on Jacod and Shiryaev \cite[Proposition II.2.9]{jacod2003},
a continuous increasing real-valued predictable process $\Lambda$ exists such that 
it holds that, for all $i,j=1,\dots,d$ and $t \ge 0$,
\begin{equation}
A_{t}^{i} = \int_{0}^{t} a_{u}^{i} d\Lambda_{u}, \;
\left<M^{i},M^{j}\right>_{t} = \int_{0}^{t} c_{u}^{ij} d\Lambda_{u},
\end{equation}
where $a=(a^{1},\dots,a^{d})^{\top}$ is an $\mathbb{R}^{d}$-valued predictable process
and $c = ((c^{i1})_{1 \le i \le d},\dots,(c^{id})_{1 \le i \le d})$ is a predictable process taking values in the cone of symmetric nonnegative $d \times d$-matrices.
Let us denote by $c_{t}^{+}$ for each $t \ge 0$ the Moore--Penrose pseudoinverse of the matrix $c_{t}$.
Then, the process $a$ can be represented as 
\begin{equation}
a = c \lambda + \nu,
\end{equation}
where $\lambda = (\lambda_{t})_{t \ge 0}$ with $\lambda_{t} := c_{t}^{+} a_{t}$
and $\nu= (\nu_{t})_{t \ge 0}$ are $\mathbb{R}^{d}$-valued predictable processes with $\nu_{t} \in \mathrm{Ker}(c_{t}):=\{x \in \mathbb{R}^{d} : c_{t} x = 0 \}$ for each $t \ge 0$.
We further decompose $A$ into 
\begin{equation}
A_{t} = \int_{0}^{t} d\left<M,M\right>_{u} \lambda_{u} + \kappa_{t},\;
\kappa_{t} := \int_{0}^{t} \nu_{u} d\Lambda_{u}.
\end{equation}

We suppose that the financial market is frictionless in the sense that there are no trading restrictions, transaction costs, and so on (we assume the existence of large traders that affect the asset prices in Section \ref{sec:funding}).
The activity of trading is represented by the stochastic integral with respect to $S$.
Let $L(S)$ be the set of all $\mathbb{R}^{d}$-valued $S$-integrable predictable processes, in the sense of Jacod and Shiryaev \cite[III.6]{jacod2003}.
For $h =(h^{i})_{i=1,\dots,d} \in L(S)$, the stochastic integral process $(\int_{0}^{t} h_{u} dS_{u})_{t \ge 0}$ is denoted by $h \cdot S$
and the process $V(x,h) := (V_{t}(x,h))_{t \ge 0}$ for $x \in \mathbb{R}$ is defined as $V_{t}(x,h) : = x+(h \cdot S)_{t}$.
Then, the process $V(x,h)$ expresses the wealth process of the self-financing strategy
such that the initial fortune is $x$
and the number of units of the $i$-th stock held at time $t$ is $h_{t}^{i}$.
We introduce the notion of {\it admissible strategy} in order to ban so-called doubling strategies as follows:
\begin{definition}
For $a \in [0,\infty)$,
an element $h \in L(S)$ is said to be an $a$-admissible strategy
if $P[(h \cdot S)_{t} \ge -a \text{ for all } t \in [0,\infty)]=1$ holds.
An element $h \in L(S)$ is said to be an admissible strategy
if it is an $a$-admissible strategy for some $a \ge 0$.
The set of all $a$-admissible strategies are denoted by $\mathcal{A}_{a}$,
and $\mathcal{A} := \bigcup_{a \ge 0} \mathcal{A}_{a}$.
\end{definition}

The {\it num\'{e}raire portfolio} and {\it martingale deflators} play important roles in mathematical finance.
The existence of the num\'{e}raire portfolio is the minimal a-priori assumption required to proceed with utility optimization (Karatzas and Kardaras \cite{Karatzas2007}).
It is important for contingent claim valuation in the Benchmark approach (Platen and Heath \cite{platen2006benchmark}).
The num\'{e}raire portfolio is also called growth optimal and benchmark portfolio (see Hulley and Schweizer \cite[Theorem 2.1]{Hulley2010}).
\begin{definition}
The portfolio generated by the strategy $\widehat{h} \in \mathcal{A}_{1}$ is called num\'{e}raire portfolio,
if the relative wealth process $V(1,h)/V(1,\widehat{h})$
is a supermartingale for every $h \in \mathcal{A}_{1}$.
\end{definition}
\begin{definition}
A nonnegative local martingale $Z=(Z_{t})_{t \ge 0}$ with $Z_{0}=1$
is called a weak martingale deflator if the product $ZS^{i}$ is a local martingale for all $i=1,\dots,d$
and $Z$ is called a martingale deflator if $Z$ satisfies $Z_{t}>0$ for $t \ge 0$ in addition.
\end{definition}

\subsection{Notions of arbitrage}
Several forms of arbitrage have been proposed in the literature.
Among them, we review the definitions of {\it increasing profit}, {\it strong arbitrage}, {\it arbitrage of the first kind}, {\it arbitrage}, and {\it Free Lunch with Vanishing Risk} below:
\begin{definition}
Let $T \in (0,\infty)$ be a constant.
\begin{itemize}
\item[(i)] An element $h \in \mathcal{A}_{0}$ generates an {\rm increasing profit}
if $P[(h\cdot S)_{T} >0] > 0$
and $P[(h\cdot S)_{s} \le (h\cdot S)_{t}\text{ for all } 0 \le s \le t \le T]=1$;
\item[(ii)] an element $h \in \mathcal{A}_{0}$ generates a {\rm strong arbitrage opportunity}
if $P[(h\cdot S)_{T}>0]>0$;
\item[(iii)] a nonnegative random variable $\xi$ generates an {\rm arbitrage of the first kind}
if $P[\xi >0]>0$ and for every $\varepsilon \in (0,\infty)$,
there exists an element $h^{\varepsilon} \in \mathcal{A}_{\varepsilon}$
such that $P[V_{T}(\delta,h^{\varepsilon}) \ge \xi]=1$;
\item[(iv)] an element $h \in \mathcal{A}$ generates an {\rm arbitrage opportunity} 
if $P[(h\cdot S)_{T}>0]>0$ and $P[(h\cdot S)_{T} \ge 0]=1$;
\item[(v)] a sequence $\{h^{n}\}_{n \in \mathbb{N}}$ generates a {\rm Free Lunch with Vanishing Risk},
if there exists an $\varepsilon > 0$
and an increasing sequence $\{\eta_{n}\}_{n \in \mathbb{N}}$ with $0 \le \eta_{n} \nearrow 1$
such that $P[(h^{n}\cdot S)_{T}>\varepsilon] > \varepsilon$
and $P[(h^{n}\cdot S)_{T}>-1+\eta_{n}]=1$, for all $n \in \mathbb{N}$.
\end{itemize}
The acronyms NIP, NSA, NA1, NA, and NFLVR for the corresponding types of no-arbitrage are used.
These no-arbitrage conditions, if the time horizon $T$ is omitted, are understood for all $T \in (0,\infty)$.
\end{definition}

The implications about these no-arbitrage conditions are as follows:
\begin{equation}
\mathrm{NFLVR} \Longleftrightarrow
\mathrm{NA \& NA1} \Longrightarrow
\mathrm{NA1} \Longrightarrow
\mathrm{NSA} \Longrightarrow
\mathrm{NIP}. \label{eq:implication}
\end{equation}
The none of the converse implications holds in general.
For the proof and explicit examples and counterexamples, see Fontana \cite{Fontana2015}, 
which provided a unified analysis of no-arbitrage conditions
for financial market models based on continuous semimartingales
putting an emphasis on the role of martingale deflators.

An increasing profit is the strongest form of arbitrage for continuous-time financial market models
and the NIP condition has been considered to be banned from any reasonable market model.
However, financial market models that violate the NIP condition are considered for markets under government or central bank interventions, 
such as interest rates or exchange rate markets.
The NIP condition is equivalent to $\nu_{\cdot} = 0$ holds $P \otimes \Lambda$-a.e.,
or equivalently, $P[\rho^{*} = \infty]=1$ (Fontana \cite[Theorem 3.1]{Fontana2015}), where
\begin{equation}
\rho^{*} := \inf \{t \ge 0 : |\kappa|_{t} > 0 \},\;
|\kappa|_{t} := \int_{0}^{t} |\nu_{u}|d\Lambda_{u}
\end{equation}
with the usual convention $\inf \emptyset = \infty$.

The NSA condition is also regarded as the necessary condition for any reasonable market model.
Provided that the NIP condition holds,
the NSA condition is equivalent to $P[\rho^{+} = \infty]=1$ (Fontana \cite[Theorem 4.3]{Fontana2015}),
where 
\begin{equation}
\rho^{+} := \inf \{t > 0 : \widehat{K}_{t}^{t+\delta} = \infty \text{ for all } \delta > 0\},
\end{equation}
and 
\begin{equation}
\widehat{K}_{s}^{t} := \int_{s}^{t} \lambda_{u}^{\top} d\left<M,M\right>_{u} \lambda_{u}\label{eq:mean-variance}
\end{equation}
for $s,t \ge 0$ with $s < t$
and $\widehat{K}_{s}^{t} :=0$ with $t \le s$
(we always define $\int_{s}^{t}:=0$ with $t \le s$ for any kind of integrals).
The process $(\widehat{K}_{t})_{t\ge0} := (\widehat{K}_{0}^{t})_{t\ge 0}$ is called the {\it mean-variance trade-off process}.
The NSA condition turns out to be equivalent to the absence of {\it immediate arbitrage},
which was first formulated in Delbaen and Schachermayer \cite{DelbaenSchachermayer1995b} (Fontana \cite[Lemma 4.2]{Fontana2015}).

In contrast to the above two no-arbitrage conditions,
the NA1 condition is considered to be the minimal assumption for a reasonable financial market model.
More precisely, Karatzas and Kardaras \cite{Karatzas2007} 
showed NUPBR as the minimal a-priori assumption required to proceed with utility optimization,
and this condition was proved to be equivalent to the NA1 condition (Fontana \cite[Lemma 5.2]{Fontana2015}).
Provided that the NSA condition holds,
the NA1 condition holds
if and only if $P[\rho^{1} = \infty]=1$ (Fontana \cite[Theorem 4.3]{Fontana2015}),
where
\begin{equation}
\rho^{1} := \inf \{t >0 : \widehat{K}_{t} = \infty \}.
\end{equation}
We remark that $P[\rho^{1} \le \rho^{+}]=1$.

The notions of NFLVR and NA are classical.
Delbaen and Schachermayer \cite{Delbaen1994} proved, in the case of locally bounded process (this assumption is removed later),
the equivalence between the NFLVR condition
and
the existence of an {\it Equivalent Local Martingale Measure}; that is,
a probability measure equivalent to the original one such that the discounted asset price process is a local martingale under the new measure.

We define a right-continuous nonnegative supermartingale
$\widehat{Z}=(\widehat{Z}_{t})_{t\ge 0}$ as
\begin{equation}
\widehat{Z}_{t}
:= \exp\left(-\int_{0}^{t}\lambda_{u} dM_{u}-\frac{1}{2}\int_{0}^{t} \lambda_{u}^{\top} d\left<M,M\right>_{u} \lambda_{u} \right)1_{\{t < \rho^{1}\}}. \label{eq:defrator}
\end{equation}
We remark that
$\widehat{Z}_{\rho^{1}} = 0$ for $\rho^{1} < \infty$
and $\widehat{Z}_{\cdot}(\omega)$ is not continuous at $\rho^{1}(\omega)$ for
$\omega \in \Omega$
such that $\widehat{K}_{\cdot}(\omega)$ jumps to $\infty$ at $\rho^{1}(\omega)<\infty$,
or equivalently, $\rho^{1}(\omega)<\infty$
and $\widehat{K}_{\rho^{1}}(\omega)<\infty$.
Under the NA1 condition,
$\widehat{Z}$ is a continuous martingale deflator,
which is called the {\it minimal martingale deflator} (Schweizer \cite{Schweizer1995}),
and
its reciprocal $\widehat{G} := 1/\widehat{Z}$ is a num\'{e}raire portfolio,
for which $h := \widehat{G} \lambda \in \mathcal{A}_{1}$
is the trading strategy; that is, $\widehat{G} = V(1,h)$.
The $\widehat{G}$-denominated market model $(1/\widehat{G},S/\widehat{G})$ of the original model $(1,S)$ satisfies the NFLVR condition,
where the original probability $P$ is itself an ELMM.
However, in the presence of stronger arbitrage,
$\widehat{G}$ is not well defined as a real-valued process after $\rho^{1}$,
although $\widehat{Z}$ is well-defined.
Instead, we introduce $\widehat{L}=(\widehat{L}_{t})_{t \ge 0}$ as $\widehat{L}_{0}=1$ and $\widehat{L}_{t} :=\widehat{Z}_{t-}$ for $t > 0$.
We remark that $\widehat{L}_{\cdot}(\omega)$ and $\widehat{Z}_{\cdot}(\omega)$ may differ at $\rho^{1}(\omega) < \infty$;
that is, $\widehat{L}_{\rho^{1}}(\omega) > \widehat{Z}_{\rho^{1}}(\omega) = 0$
for $\omega \in \Omega$ such that $\rho^{1}(\omega) < \infty$
and $\widehat{K}_{\rho^{1}}(\omega)<\infty$.
Instead of $\widehat{Z}$ and $\widehat{G}$, we use $\widehat{L}$ and $1/\widehat{L}$ appropriately stopped.
\begin{lemma}
\label{lem:L}
Let $\theta_{\varepsilon}^{1} := \inf \{t >0 : \varepsilon > \widehat{L}_{t} \}$
for $\varepsilon \in (0,1)$.
Then, the stopped process $\widehat{L}^{\rho^{1}}$ is a nonnegative continuous local martingale that satisfies
\begin{equation}
\widehat{L}_{t}^{\theta_{\varepsilon}^{1}}
= 1 - \int_{0}^{t \wedge \theta_{\varepsilon}^{1}} \widehat{L}_{t}\lambda_{u} dM_{u},
\end{equation}
and the reciprocal $1/\widehat{L}^{\theta_{\varepsilon}^{1}}$
is a continuous semimartingale that
is bounded by $1/\varepsilon$
and is
expressed with a stochastic integral with respect to $S$:
\begin{equation}
\frac{1}{\widehat{L}_{t}^{\theta_{\varepsilon}^{1}}} = 1 +
\int_{0}^{t \wedge \theta_{\varepsilon}^{1}}\frac{1}{\widehat{L}_{u}} \lambda_{u} dS_{u}. 
\end{equation}
\end{lemma}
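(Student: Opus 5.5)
The plan is to identify $\widehat{L}$ on $[0,\rho^{1})$ with the stochastic exponential
\[
\mathcal{E}_{t}:=\exp\Bigl(-\int_{0}^{t}\lambda_{u}dM_{u}-\tfrac12\widehat{K}_{t}\Bigr),
\]
and then to use It\^o's formula on the stopped process.

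First, $\widehat{K}$ is finite on $[0,\rho^{1})$ by definition of $\rho^{1}$. Hence $\lambda$ is $M$-integrable on each stochastic interval $[0,\tau_{n}]$, where $\tau_{n}:=\inf\{t:\widehat{K}_{t}\ge n\}\wedge n$. On this interval $\int_{0}^{\cdot}\lambda dM$ is a continuous local martingale with quadratic variation $\widehat{K}$. Consequently $\mathcal{E}^{\tau_{n}}$ is a strictly positive continuous local martingale satisfying
\[
\mathcal{E}_{t\wedge\tau_{n}}=1-\int_{0}^{t\wedge\tau_{n}}\mathcal{E}_{u}\lambda_{u}dM_{u}.
\]

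Next I would handle the behaviour at $\rho^{1}$. Since $\tau_{n}\uparrow\rho^{1}$, and since $\widehat{Z}=\mathcal{E}$ on $[0,\rho^{1})$, we get $\widehat{L}_{t}=\widehat{Z}_{t-}=\mathcal{E}_{t}$ for $t<\rho^{1}$. At $\rho^{1}$ we have $\widehat{L}_{\rho^{1}}=\lim_{t\uparrow\rho^{1}}\mathcal{E}_{t}$. This limit exists and is finite because $\mathcal{E}$ is a nonnegative local martingale, hence a supermartingale, on the localized intervals, and nonnegative supermartingale convergence applies to the limit along $\tau_{n}$. So $\widehat{L}^{\rho^{1}}$ is continuous, and, being a nonnegative process that is the limit of the stopped local martingales $\mathcal{E}^{\tau_{n}}$, it is a continuous local martingale on $[0,\rho^{1}]$.

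The main obstacle is handling the case $\widehat{K}_{\rho^{1}-}=\infty$, i.e.\ when $\widehat{K}$ reaches $\infty$ continuously. Here I would argue as follows: on $\{\widehat{K}_{\rho^{1}-}=\infty\}$ a continuous local martingale with infinite quadratic variation, combined with the drift $-\frac12\widehat{K}$, forces $\mathcal{E}\to0$ (Dambis--Dubins--Schwarz plus the law of large numbers for Brownian motion). So continuity still holds, and $\widehat{L}^{\rho^{1}}$ is a continuous local martingale reducible by $\tau_{n}$.

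For the stopped identity, I first need $\theta_{\varepsilon}^{1}\le\rho^{1}$ and $\widehat{L}\ge\varepsilon$ on $[0,\theta_{\varepsilon}^{1}]$. I expect that $\widehat{K}$ is bounded on $[0,\theta_{\varepsilon}^{1}]$ pathwise, since $\widehat{K}_{\theta}<\infty$ would fail only on the jump case where $\widehat{L}_{\rho^{1}}>0$. In that jump case the integral representation still holds because $\widehat{L}$ is continuous up to $\rho^{1}$ and the representation passes to the limit $n\to\infty$. Then the displayed representation of $\widehat{L}^{\theta^{1}_{\varepsilon}}$ follows from the one for $\mathcal{E}^{\tau_{n}}$ by letting $n\to\infty$.

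For the reciprocal, apply It\^o's formula to $f(x)=1/x$ on $[0,\theta_{\varepsilon}^{1}]$, where $\widehat{L}\ge\varepsilon$ by continuity and the definition of $\theta_{\varepsilon}^{1}$:
\[
d\frac1{\widehat{L}}=-\frac{d\widehat{L}}{\widehat{L}^{2}}+\frac{d\langle\widehat{L}\rangle}{\widehat{L}^{3}}=\frac1{\widehat{L}}\bigl(\lambda dM+\lambda^{\top}d\langle M,M\rangle\lambda\bigr).
\]
Since $\nu\in\mathrm{Ker}(c)$ and $\lambda=c^{+}a$ lies in the range of $c$, we have $\lambda^{\top}\nu=0$. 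Hence $\lambda\,d\kappa=0$, and $\lambda\,dA=\lambda^{\top}d\langle M,M\rangle\lambda$. Therefore the right-hand side equals $\frac1{\widehat{L}}\lambda\,dS$. The bound $1/\widehat{L}^{\theta^{1}_{\varepsilon}}\le1/\varepsilon$ is immediate from $\widehat{L}\ge\varepsilon$ on $[0,\theta_{\varepsilon}^{1}]$, again using continuity.
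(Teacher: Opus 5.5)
Your proposal is correct and follows the same route as the paper. The paper's proof only observes that left-continuity of $\widehat{L}$ gives $\widehat{L}_{\theta^{1}_{\varepsilon}}\ge\varepsilon$ and calls both integral formulas straightforward; you supply the omitted details, namely the identification with the stochastic exponential of $-\lambda\cdot M$ before $\rho^{1}$, continuity at $\rho^{1}$ via Dambis--Dubins--Schwarz, and $\lambda^{\top}\nu=0$ to turn $\lambda\,dM+\lambda^{\top}d\langle M,M\rangle\lambda$ into $\lambda\,dS$.

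Two small corrections. First, in the jump case $\widehat{K}_{\rho^{1}}<\infty$ by definition, so $\widehat{K}_{\theta^{1}_{\varepsilon}}<\infty$ never fails there; when $\widehat{K}_{\rho^{1}}=\infty$ one has $\theta^{1}_{\varepsilon}<\rho^{1}$ because $\widehat{L}\to0$ continuously. Second, $\tau_{n}$ increases only to $\rho^{1}$, so it does not reduce $\widehat{L}^{\rho^{1}}$; obtain the local martingale property instead by localizing with $\inf\{t:\widehat{L}^{\rho^{1}}_{t}\ge k\}$ and using bounded convergence together with continuity at $\rho^{1}$.
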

\begin{proof}
The left-continuity of $\widehat{L}$ yields $\widehat{L}_{\theta_{\varepsilon}^{1}} \ge \varepsilon$.
The integral expressions of $\widehat{L}^{\theta_{\varepsilon}^{1}}$ and $1/\widehat{L}^{\theta_{\varepsilon}^{1}}$ are straightforward.
\end{proof}

\subsection{Contingent claim valuation}
\label{sec:Contingent claim valuation}
Contingent claim valuation is discussed in a market model under a no-arbitrage condition with an appropriate class of admissible strategies.
We consider the price of the contingent claim that pays $\xi$ at time $T \in (0,\infty)$ 
for a nonnegative $\mathcal{F}_{T}$-measurable random variable $\xi$.
A standard definition is the so-called {\it upper hedging price},
which is defined as the smallest initial fortune required to finance an admissible superreplicating strategy:
\begin{equation}
\mathcal{U}(\xi)
:= \inf \{x \in \mathbb{R} \;|\text{ there exists a strategy $h \in \mathcal{A}$ with $\xi \le V_{T}(x,h)$} \}.
\end{equation}

Under the NFLVR condition, the price for the contingent claim can be calculated as the following duality equality
\begin{equation}
\sup_{Q} E^{Q}[\xi] = \mathcal{U}(\xi),\label{eq:price-duality}
\end{equation}
provided that $\xi$ is bounded,
where $Q$ in the $\sup$ of the right-hand side is an ELMM and $E^{Q}$ is the expectation operator with respect to $Q$
(Delbaen and Schachermayer \cite[Theorem 5.7]{DelbaenSchachermayer1995b}).
More generally, \eqref{eq:price-duality} holds with $\le$ instead of the equality.

Without ELMMs, contingent claim valuation is still possible
in a market model that satisfies the NA1 condition,
where the num\'{e}raire portfolio $\widehat{G}$ is well defined.
Because the $\widehat{G}$-denominated financial market model $(1/\widehat{G},S/\widehat{G})$ satisfies the NFLVR condition, 
the contingent claim valuation in this model is reduced to the abovementioned approach.
The upper hedging price in the $\widehat{G}$-denominated financial market model is greater than
\begin{equation}
E\left[\frac{\xi}{\widehat{G}_{T}}\right] = E[\xi \widehat{Z}_{T}],
\end{equation}
because $P$ is an ELMM for the market model $(1/\widehat{G},S/\widehat{G})$.
This is the price by the Benchmark approach (Platen and Heath \cite{platen2006benchmark}), and its generalization of Ruf \cite{ruf2012}.
We stress that the notions of $\mathcal{A}$ and $\mathcal{A}_{a}, a > 0$ are dependent on the num\'{e}raire
and
that the pair $(x,h) \in \mathbb{R} \times \mathcal{A}$ of admissible strategies in the Benchmark approach are restricted to
those with $V_{t}(x,h) \ge 0$ for $t \in [0,T]$.
Carr et~al. \cite{CarrFisherRuf2014} proposed a pricing operator based on 
superreplicating strategies that succeed with probability one both for the original market model and $\widehat{G}$-denominated one.
See Fontana and Runggaldier \cite[Section 5]{FontanaRunggaldier2013} for other approaches under the NA1 condition.

Contingent claim valuation in market models that violate the NA1 condition has rarely been the subject of investigation, to the best of our knowledge.
A notable exception was Jarrow and Protter \cite{Jarrow2005}.
They considered valuation and replication
in market models with increasing profit and with strong arbitrage (see Jarrow and Protter \cite[Sections 4 and 7]{Jarrow2005}
and Examples \ref{ex:rgbm} and \ref{ex:jarrowprotter} of this study).
Rossello \cite{Rossello2012} pointed out that a skew Brownian motion gives rise to some kind of arbitrage
and Buckner et~al. \cite{Buckner2022} demonstrated
that the option pricing formulae based on a reflected Brownian motion proposed by Veestraeten \cite{Veestraeten2008} violate classical no-arbitrage bounds.
They, however, did not propose any alternative approach.

\subsection{Examples based on the three-dimensional Bessel process}
\label{sec:eg_bessel}
Based on the three-dimensional Bessel process,
we construct several financial market models with several forms of arbitrage,
where a savings account with zero interest rate and a stock are traded.
We begin with a Wiener measure
and then obtain a three-dimensional Bessel process by an absolutely continuous change of probability measures.
Then, we develop market models by enlarging filtrations and conditioning.

We work on the canonical sample space $(\mathscr{C},\mathscr{B}(\mathscr{C}))$,
where 
$\mathscr{C} :=C([0,\infty))$ is the space of all continuous real-valued functions on $[0,\infty)$
and $\mathscr{B}(\mathscr{C})$ is its Borel $\sigma$-filed.
Let $X$ denote the coordinate process; that is, $X_{t}(\omega):=\omega(t)$ for $\omega\in \mathscr{C}$,
$(\mathcal{G}_{t}^{0})$ be the filtration generated by $X$,
and $\tau_{j} := \inf \{t > 0: X_{t} = j\}$ be the first hitting time to $j \in \mathbb{R}$.
On this sample space,
we define the Wiener measure $W_{x}$ for $x \in \mathbb{R}$; 
that is, the law of the Brownian motion started at $x$,
and a probability measure $P_{x}^{(1/2)}$ by
\begin{equation}
P_{x}^{(1/2)} = \frac{X_{t \wedge \tau_{0}}}{x} \cdot W_{x} \quad \text{ on } \quad \mathcal{G}_{t}^{0}, \label{eq:conditioning}
\end{equation}
which is the law of the three-dimensional Bessel process started at $x > 0$.
Similarly, we define the laws $P_{x}^{(\nu)}$ of the Bessel process with index $\nu$ (see Revuz and Yor \cite[Exercise XI.1.22]{revuzyor}, or Borodin and Salminen \cite[page 138]{BorodinSalminen2002}):
\begin{equation}
P_{x}^{(\nu)}
= \left(\frac{X_{t\wedge \tau_{0}}}{x}\right)^{\nu+1/2}
\exp\left(-\frac{\nu^{2}-1/4}{2}\int_{0}^{t}\frac{du}{X_{u}^{2}}\right) \cdot W_{x}  \quad \text{ on } \quad \mathcal{G}_{t}^{0}. \label{eq:bessel_ac}
\end{equation}
Furthermore, the laws $P_{x}^{(\nu)}$ can be extended continuously to $P_{0}^{(\nu)}$ for $\nu > 0$,
although they are not absolutely continuous to $W_{0}$.
For these probability measures $W_{x}$ and $P_{x}^{(\nu)}$, the expectation operator is also denoted by $W_{x}$ and $P_{x}^{(\nu)}$, respectively.

\subsubsection{Bachelier's model}
\label{sec:bachelier}
We begin with the stock price process $X$ with the probability measure $W_{x}$ for $x > 0$ 
and the usual augmentation of the filtration generated by $X$.
This is Bachelier's model (Bachelier \cite{Bachelier1900}).
The market model $(1,X)$ satisfies the NFLVR condition, where $W_{x}$ is an ELMM.
We abused the notation because we defined a stock price as a nonnegative semimartingale.

In this model, the price at time $0$ of the contingent claim that pays $g(X_{\tau})1_{\{\tau_{j} > \tau\}}$ at time $\tau$ is
\begin{equation}
\Psi^{g}(\tau,x,j)
:=  W_{x} \left[g(X_{\tau}) : \tau_{j} > \tau \right],
\end{equation}
where $g :[0,\infty) \longrightarrow [0,\infty)$ is a Borel measurable function
with $g(X_{\tau})1_{\{\tau_{j} > \tau\}} \in L^{1}(W_{x})$.
We define $\Psi_{x}^{g}(\cdot,x,j)$ and $\Psi_{xx}^{g}(\cdot,x,j)$
as the first- and second- derivative for $x \neq j$, respectively, 
and $\Psi_{x}^{g}(\cdot,j,j) := \frac{1}{2}(\Psi_{x+}^{g}(\cdot,j,j)+\Psi_{x-}^{g}(\cdot,j,j))$
and $\Psi_{xx}^{g}(\cdot,j,j) :=\Psi_{x+}^{g}(\cdot,j,j)-\Psi_{x-}^{g}(\cdot,j,j)$, where
$\Psi_{x-}^{g}$ and $\Psi_{x+}^{g}$ are the left- and right- limit of $\Psi_{x}^{g}$, respectively.
If $\Psi^{g}$ is sufficiently smooth with $\tau$ and $x$,
$\Psi^{g}$ satisfies
\begin{equation}
\Psi_{\tau}^{g} = \frac{1}{2}\Psi_{xx}^{g} \label{eq:heat}
\end{equation}
in an appropriate domain.

\subsubsection{Three-dimensional Bessel process}
By an absolutely continuous change of probability measures \eqref{eq:conditioning},
we obtain the stock price process $X$ on the probability measure
$(\mathscr{C},\mathcal{G}_{\infty},(\mathcal{G}_{t}),P_{x}^{(1/2)})$
with $x \ge 0$,
where $(\mathcal{G}_{t})$ be the $P_{x}^{(1/2)}$-augmented natural filtration of $X$
and $\mathcal{G}_{\infty} = \vee_{t \ge 0} \mathcal{G}_{t}$.
This is one of the best-known examples for market models for which the NFLVR condition fail (see Delbaen and Schachermayer \cite{DelbaenSchachermayer1995Bessel}).

If $x=0$, the NSA condition is violated,
where a strong arbitrage occurs at $\rho^{+} = 0$.

The model for $x > 0$ does not satisfy the NA condition, but the NA1 condition, where $\widehat{Z}=x/X$ is a martingale deflator
and $\widehat{G} = X/x$ is the num\'{e}raire portfolio.
An arbitrage strategy is explicitly demonstrated.
For $\tau>0$ and $x>0$, let 
\begin{equation}
\Psi(\tau,x) := P_{x}^{(1/2)}\left[\frac{X_{0}}{X_{\tau}}\right]= W_{x}[\tau < \tau_{0}].
\end{equation}
Then, for $T>0$, the wealth process of the strategy with the initial fortune $0$
and $\Psi_{x}(T-t,X_{t})$-units of the stock at time $t \le T$
is
\begin{equation}
\int_{0}^{t} \Psi_{x}(T-u,X_{u}) dX_{u} = \Psi(T-t,X_{t})-\Psi(T,x)
\ge -\Psi(T,x).
\end{equation}
This shows that the strategy is admissible
and yields $1-\Psi(T,x)>0$ at time $T$ with initial fortune $0$.
However, this strategy is not admissible in the $\widehat{G}$-denominated market model $(x/X,x)$,
because the $\widehat{G}$-denominated wealth process $(\Psi(T-t,X_{t})-\Psi(T,x))x/X_{t}$
is not bounded below.
No initial fortune is required for replicating the contingent claim that pays $1$ at time $T$ in the market model $(1,X)$,
while the price by the Benchmark approach is
\begin{equation}
P_{x}^{(1/2)}\left[\frac{X_{0}}{X_{T}}\right]=\Psi(T,x).
\end{equation}
The savings account denominated by the stock price $X$ 
is a strict local martingale (i.e., a local martingale but not a true martingale) with respect to $P_{x}^{(1/2)}$.
This kind of market models is called a bubble model
and has been studied in the literature (Cox and Hobson \cite{CoxHobson2005}).

\subsubsection{Enlargement with the future infimum process $J$}
Let $J$ be the future infimum process of $X$; that is, $J_{t} := \inf_{u>t} X_{u}$.
Then, Pitman's theorem (Revuz and Yor \cite[Theorem VI.3.6]{revuzyor}) shows that
a $P_{x}^{(1/2)}$-Brownian motion $\beta$ exists such that 
\begin{equation}
X_{t} = \beta_{t}+2J_{t}.
\end{equation}
The financial market model $(1,X)$ on the probability measure
$(\mathscr{C},\mathcal{G}_{\infty},(\mathcal{G}_{t}^{J}),P_{x}^{(1/2)})$ with $x>0$
violates the NIP condition,
where $\mathcal{G}_{t}^{J} := \mathcal{G}_{t} \vee \sigma (J_{t})$.

\subsubsection{Enlargement with a future infimum process $J_{0}$}
\label{sec:williams}
Let $(\mathcal{G}_{t}^{J_{0}})$ be the initial enlargement of 
the filtration $(\mathcal{G}_{t})$ with $J_{0}$
and $\tau_{J_{0}}:=\inf \{t > 0: X_{t} = J_{0}\}$.
Then, Williams's path decomposition (Revuz and Yor \cite[Theorem VI.3.11]{revuzyor}) shows 
that $X$ satisfies 
\begin{equation}
X_{t} = X_{0} + \beta_{t \wedge \tau_{J_{0}}} + R_{(t-\tau_{J_{0}})_{+}},
\end{equation}
where $\beta$ is a Brownian motion with $\beta_{0}=0$
and $R$ is a three-dimensional Bessel process with $R_{0}=0$.
Then, the financial market model $(1,X)$ on the probability measure
$(\mathscr{C},\mathcal{G}_{\infty},(\mathcal{G}_{t}^{J_{0}}),P_{x}^{(1/2)})$
with $x>0$
violates the NSA condition: $\rho^{+}=\tau_{J_{0}}$.

\subsubsection{Enlargement with an honest time $\tau_{J_{0}}$}
The $(\mathcal{G}_{t}^{J})$-stopping time $\tau_{J_{0}}$ 
is not a stopping time with respect to $(\mathcal{G}_{t})$,
but an honest time: $\tau_{J_{0}}= \sup\{t : 1/X_{t} = \max_{u>0} 1/X_{u}\}$.
Let $(\mathcal{G}_{t}^{\tau_{J_{0}}})$ be the progressive enlargement of 
the filtration $(\mathcal{G}_{t})$ with $\tau_{J_{0}}$.
Then, the financial market model $(1/X^{\tau_{J_{0}}},1)$ on the probability measure
$(\mathscr{C},\mathcal{G}_{\infty},(\mathcal{G}_{t}^{\tau_{J_{0}}}),P_{x}^{(1/2)})$ with $x>0$
violates the NA condition.
For example, buying the saving account at time $0$ and selling at time $\tau$
yields a profit $1/X_{\tau_{J_{0}}}-1/X_{0} > 0$.

This model is very different from the others in the sense that the predictable representation property, Assumption \ref{ass:prp} below, is not satisfied.
Fontana et~al. \cite{Fontana2014} studied whether the additional information associated with an honest time gives rise to arbitrage profits.
We do not treat this model in our study and leave the investigation to future research.

\section{Contingent claim valuation under arbitrage}
\label{sec:ccv_under_arb}
We consider the price of the contingent claim that pays $\xi$ at a finite stopping time $\zeta$ 
for a nonnegative $\mathcal{F}_{\zeta}$-measurable random variable $\xi$.
From the point of view of the seller of the contingent claim,
the payoff $\xi$ is a liability that has to be covered with the right amount of  fortune at time $t=0$ and the right trading strategy.
The price of the contingent claim in this study is defined as the smallest
initial fortune required to finance an admissible superreplicating strategy.
\begin{definition}
\label{def:price}
For a nonnegative $\mathcal{F}_{\zeta}$-measurable random variable $\xi$,
the price of the contingent claim that pays $\xi$ at time $\zeta$ is defined as 
\begin{equation}
\inf \{x \ge 0 : h \in \mathcal{A}_{x} {\text{ such that }} \xi \le V_{\zeta}(x,h) \}.
\end{equation}
\end{definition}
\iffalse
\begin{remark}
\label{rem:creditline}
In Definition \ref{def:price}, candidate strategies $(x,h)$ are restricted to those with $V_{t}(x,h) \ge 0$ for $t \in [0,\zeta]$.
We could have relaxed this condition to those with $V_{t}(x,h) \ge -c$ for some $c > 0$.
Then, the fair price could be arbitrarily small,
which is demonstrated in Example \ref{ex:rgbm} below.
\end{remark}
\fi

We work under the following assumptions in this section.
\begin{assumption}
\label{ass:trivial}
The initial $\sigma$-field $\mathcal{F}_{0}$ is trivial.
\end{assumption}
\begin{assumption}
\label{ass:prp}
The continuous local martingale $M$ has the predictable representation property with respect to $(\mathcal{F}_{t})$;
that is,
any $(\mathcal{F}_{t})$-local martingale $N=(N_{t})$ with $N_{0}=0$
can be represented as $N=h \cdot M$ for some $h \in L(M)$.
\end{assumption}
\begin{remark}
\label{rem:prp}
Assumption \ref{ass:prp} ensures that all local martingale have continuous versions.
In particular, we always can and do take a continuous version for a martingale in a form $E[\cdot | \mathcal{F}_{t}]$.
\end{remark}

The following is the main theorem in this study.
\begin{theorem}
\label{thm:main}
Suppose that Assumptions \ref{ass:trivial} and \ref{ass:prp} hold.
Then, the price of the contingent claim that pays a bounded $\mathcal{F}_{\zeta}$ random variable $\xi$ at time $\zeta$ is
\begin{equation}
\xi_{0} :=
E\left[\xi \widehat{L}_{\zeta}1_{\{\zeta \le \rho\}} \right]
+
E\left[\xi_{\rho^{+}} \widehat{L}_{\rho^{+}} 1_{\{\rho^{+} = \rho < \zeta\}}\right], \label{eq:main}
\end{equation}
where $\rho := \rho^{*} \wedge \rho^{+} \wedge \rho^{1}$
and 
$\xi_{\rho^{+}}$ is an $\mathcal{F}_{\rho^{+}}$-measurable random variable
and is the price at time $\rho^{+} < \zeta$.
\end{theorem}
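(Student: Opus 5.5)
We prove the two inequalities ``price $\ge \xi_0$'' and ``price $\le \xi_0$'' separately. Throughout, $\widehat{L}^{\rho}$ plays the role of a (weak) deflator that vanishes, or is frozen, once arbitrage occurs.

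\emph{Lower bound.} Let $x\ge 0$ and $h\in\mathcal{A}_x$ with $V_\zeta(x,h)\ge\xi$. Localise with $\theta^1_\varepsilon$ from Lemma \ref{lem:L}. On $[0,\rho\wedge\theta^1_\varepsilon]$ we have $\kappa=0$, so $S=S_0+M+\int d\langle M,M\rangle\lambda$, and $\widehat L^{\theta^1_\varepsilon}$ satisfies the stochastic exponential equation in Lemma \ref{lem:L}. Itô's product rule then shows that $\widehat L\,V(x,h)$ stopped at $\rho\wedge\theta^1_\varepsilon$ is a nonnegative local martingale, hence a supermartingale. Letting $\varepsilon\downarrow 0$ and using Fatou, together with the continuity of $\widehat L^{\rho^1}$ (Lemma \ref{lem:L}), gives
\[
x\ge E\bigl[\widehat L_{\zeta\wedge\rho}V_{\zeta\wedge\rho}(x,h)\bigr].
\]
Split this expectation according to whether $\zeta\le\rho$ or $\rho<\zeta$.
\begin{itemize}
\item On $\{\zeta\le\rho\}$ the integrand is at least $\xi\widehat L_\zeta$.
\item On $\{\rho=\rho^+<\zeta\}$ (outside $\{\rho=\rho^1\}$, where the $\widehat L$-mass is $0$ or only the first term matters), the wealth $V_{\rho^+}(x,h)$ is an admissible initial capital from which $\xi$ is superreplicated after $\rho^+$. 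By definition of $\xi_{\rho^+}$ as the conditional price at $\rho^+$, this forces $V_{\rho^+}\ge\xi_{\rho^+}$.
\item On $\{\rho=\rho^*<\zeta\}$ and on $\{\rho=\rho^1<\rho^+\}$ the integrand is nonnegative, which suffices. Here we must check that $\widehat L_{\rho^1}=0$ whenever $\widehat K$ blows up continuously, while on the jump part $\rho^1=\rho^+$ holds and the event is covered by the previous case.
\end{itemize}
This yields $x\ge\xi_0$.

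\emph{Upper bound (construction).} Define $N_t:=E[\xi\widehat L_\zeta 1_{\{\zeta\le\rho\}}+\xi_{\rho^+}\widehat L_{\rho^+}1_{\{\rho^+=\rho<\zeta\}}\mid\mathcal F_t]$. By Assumption \ref{ass:trivial}, $N_0=\xi_0$. By Assumption \ref{ass:prp} and Remark \ref{rem:prp}, $N$ is continuous and $N=\xi_0+\phi\cdot M$. Up to $\rho\wedge\theta^1_\varepsilon$ set $V:=N/\widehat L$. Itô's formula combined with the integral representation of $1/\widehat L$ in Lemma \ref{lem:L} shows that $V=\xi_0+h\cdot S$ with
\[
h=\frac{\phi}{\widehat L}+\frac{N\lambda}{\widehat L},
\]
the standard numéraire-change computation. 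Since $N\ge0$, this $h$ is admissible. At $\zeta$ on $\{\zeta\le\rho\}$ the strategy delivers $\xi$.

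\emph{Continuation after $\rho<\zeta$.} At $\rho$ we switch strategies according to the type of arbitrage.
\begin{itemize}
\item If $\rho=\rho^*$, an increasing profit $\nu$-strategy started at $\rho$ yields positive gain without any capital. Scaling it (the scaling is measurable in $\mathcal F_\rho$) gives, from zero wealth, as much wealth as needed to superreplicate the bounded $\xi$ by $\zeta$. Boundedness of $\xi$ is essential here, and approximating ``$\inf$'' allows an $\varepsilon$ of slack.
\item If $\rho=\rho^+$, the wealth $N_{\rho^+}/\widehat L_{\rho^+}=\xi_{\rho^+}$ is exactly the time-$\rho^+$ price, so we continue with a near-optimal strategy for $\xi$ from $\rho^+$, at an $\varepsilon$ cost.
\item If $\rho=\rho^1<\rho^+$, then $\widehat L\to0$ while $N$ vanishes there as well. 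We use arbitrage of the first kind after $\rho^1$ (the blow-up of $\widehat K$) to generate $\|\xi\|_\infty$ from arbitrarily small capital.
\end{itemize}

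\emph{Main obstacle.} The hard part is the behaviour at $\rho^1$ and at $\rho$. First, one must show that $V=N/\widehat L$ stays bounded, or at least converges, as $t\uparrow\rho^1$ and is admissible through the localisation $\theta^1_\varepsilon\uparrow\rho^1$. Second, the measurable pasting of the post-$\rho$ strategies requires a careful $\varepsilon$-argument. I would first treat $\rho^1$ by replacing $\theta^1_\varepsilon$ with $\theta^1_\varepsilon\wedge\rho$, and then use the fact that in the NA1-violating case $E[\,\cdot\mid\mathcal F_t]$ times $1/\widehat L$ bounds $V$ by $\|\xi\|_\infty$ plus a price term. The price term is controlled because $\xi_{\rho^+}\le\|\xi\|_\infty$.
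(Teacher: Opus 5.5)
Your lower bound is the paper's Proposition~\ref{prop:minimality} argument, and your localisation at $\rho\wedge\theta^1_\varepsilon$ with Fatou is, if anything, more careful. Your $\rho^+$ continuation also matches step 3) of Section~\ref{sec:proof}. The upper bound has a genuine gap. The architecture ``run $N/\widehat L$ until $\rho$, then switch'' fails exactly where arbitrage lowers the price: there your hedge either has no wealth left or cannot be carried that far. The arbitrage therefore has to be financed by separate capital set aside and traded from time $0$.

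\textbf{First kind.} $V=N/\widehat L$ is only defined up to $\rho\wedge\theta^1_\varepsilon$, so it does not deliver $\xi$ on $\{\theta^1_\varepsilon<\zeta\le\rho\}$. Letting $\varepsilon\downarrow0$ would require $h\in L(S)$ up to a continuous blow-up time $\rho^1$, which you leave open. The bound $0\le V\le\|\xi\|_\infty$ gives admissibility, not integrability, since $V$ is not a $P$-local martingale. Even a limit $V_{\rho^1}$ would not dominate $\xi$ on $\{\rho^1<\zeta\}$, where $\xi_0$ puts zero weight; in the Bessel example with $\nu<0$ one gets $V\to0$ while $\xi=g(0)$. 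Your remedy, ``arbitrage of the first kind after $\rho^1$'', is not available. The explosion of $1/\widehat L$ happens on $[0,\rho^1)$, and after $\rho^1$ there may be nothing to exploit (that Bessel process is absorbed at $0$). The paper instead normalises $\xi<1$ and adds, from time $0$, the strategy $(\varepsilon,\varepsilon\lambda/\widehat L\,1_{[0,\theta]})$ with $\theta=\rho^*\wedge\rho^+\wedge\theta^1_\varepsilon$. By Lemma~\ref{lem:L} its terminal wealth is $\varepsilon/\widehat L_{\theta\wedge\zeta}$. This equals $1\ge\xi$ on the whole event $\{\theta<\rho\wedge\zeta\}$ where the stopped hedge falls short, so your ``main obstacle'' never has to be confronted.

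\textbf{Increasing profit.} On $\{\rho=\rho^*<\zeta\wedge\rho^+\}\in\mathcal F_{\rho^*}$ both terms defining $N$ vanish, so your hedge reaches $\rho^*$ with zero wealth. An $\mathcal F_\rho$-measurable multiple $c$ of the direction $\nu/|\nu|$ gains exactly $c(|\kappa|_\zeta-|\kappa|_{\rho^*})$, since its $M$-part has zero quadratic variation. That gain is a.s.\ positive on $\{\rho^*<\zeta\}$ but has no positive lower bound (think of the local time at $b$ in Example~\ref{ex:rgbm}). So no such $c$ produces $\|\xi\|_\infty$ at $\zeta$, whether from zero capital or from $\varepsilon$. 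The missing idea is step 2) of the paper. Set aside $\varepsilon$ at time $0$ and replicate $1_{A_*}$ at $\zeta'=\zeta\wedge\rho^+\wedge\theta^1_{\varepsilon/2}$ against the tradable imitation num\'{e}raire $1/L^n=\mathrm{e}^{n|\kappa|}/\widehat L$, using Lemma~\ref{lem:tradable} plus Assumption~\ref{ass:prp}. The cost $E[1_{A_*}L^n_{\zeta'}]$ decreases to $E[1_{A_*}\widehat L_{\zeta'}1_{\{\zeta'\le\rho^*\}}]=0$, hence is below $\varepsilon$ for large $n$. This hedge trades dynamically in the $M$-directions as well; a pure $\nu$-position cannot do the job.
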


We treat contingent claim valuation focusing on each form of arbitrage in the following three subsections
and prove Theorem \ref{thm:main} in the final subsection.
We use neither Assumptions \ref{ass:trivial} and \ref{ass:prp}
nor the boundedness of $\xi$ to prove that the price is greater than $\xi_{0}$.
\begin{proposition}
\label{prop:minimality}
Let $\zeta$ be a finite stopping time,
$\xi$ be a $\mathcal{F}_{\zeta}$-measurable random variable,
and $\xi_{\rho^{+}}$ be a $\mathcal{F}_{\rho^{+}}$-measurable random variable,
which is the price at time $\rho^{+} < \zeta$
of the contingent claim that pays $\xi$ at time $\zeta$.
Suppose that $\xi \widehat{L}_{\zeta}1_{\{\zeta \le \rho\}}$ and $\xi_{\rho^{+}} \widehat{L}_{\rho^{+}}1_{\{\rho = \rho^{+} < \zeta\}}$
are integrable.
Then, for $a \ge 0$ and $h \in \mathcal{A}_{a}$ such that $\xi \le V_{\zeta}(a,h)$,
we obtain $\xi_{0} \le a$.
\end{proposition}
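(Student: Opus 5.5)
We prove Proposition \ref{prop:minimality} with a supermartingale argument. The deflated wealth $\widehat{L}V(a,h)$ is stopped before arbitrage occurs, and the two terms of $\xi_{0}$ in \eqref{eq:main} are then matched with the events $\{\zeta \le \rho\}$ and $\{\rho^{+}=\rho<\zeta\}$.

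\textbf{Step 1: $\widehat{L}V$ is a nonnegative local martingale up to $\rho$.} Fix $a\ge 0$ and $h\in\mathcal{A}_{a}$ with $\xi\le V_{\zeta}(a,h)$, and write $V:=V(a,h)\ge 0$. On $[0,\rho]$ we have $\kappa=0$, because $\rho\le\rho^{*}$. Hence, before $\rho$,
\[
dV=h\,dM+h^{\top}d\langle M,M\rangle\lambda .
\]
For $\varepsilon\in(0,1)$ we use $\theta^{1}_{\varepsilon}$ and the representation of $\widehat{L}^{\theta^{1}_{\varepsilon}}$ from Lemma \ref{lem:L}. Set $\sigma_{\varepsilon}:=\theta^{1}_{\varepsilon}\wedge\rho\wedge\zeta$. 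By Itô's product rule, the drift of $\widehat{L}V$ is
\[
\widehat{L}\,h^{\top}d\langle M,M\rangle\lambda-\widehat{L}\,\lambda^{\top}d\langle M,M\rangle h=0 .
\]
So $(\widehat{L}V)^{\sigma_{\varepsilon}}$ is a nonnegative continuous local martingale, hence a supermartingale. This gives
\[
a\ \ge\ E\bigl[\widehat{L}_{\sigma_{\varepsilon}}V_{\sigma_{\varepsilon}}\bigr].
\]

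\textbf{Step 2: pass to the limit $\varepsilon\downarrow 0$ with Fatou's lemma.} As $\varepsilon\downarrow0$, $\theta^{1}_{\varepsilon}$ increases to the first time $\widehat{L}$ hits $0$, which is at least $\rho^{1}\ge\rho$. Two cases arise.
\begin{itemize}
\item If $\widehat{K}_{\rho^{1}}=\infty$, then $\widehat{L}$ reaches $0$ continuously at $\rho^{1}$.
\item If $\widehat{K}$ jumps to $\infty$ at $\rho^{1}$, then $\widehat{L}_{\rho^{1}}>0$. The left-continuity of $\widehat{L}$ (Lemma \ref{lem:L}) still gives $\widehat{L}_{\sigma_{\varepsilon}}\to\widehat{L}_{\rho\wedge\zeta}$.
\end{itemize}
By Fatou's lemma and the continuity of $V$,
\[
a\ \ge\ E\bigl[\widehat{L}_{\rho\wedge\zeta}V_{\rho\wedge\zeta}\bigr]
\ \ge\ E\bigl[\widehat{L}_{\zeta}V_{\zeta}1_{\{\zeta\le\rho\}}\bigr]+E\bigl[\widehat{L}_{\rho^{+}}V_{\rho^{+}}1_{\{\rho^{+}=\rho<\zeta\}}\bigr].
\]
In the second inequality we drop the nonnegative contributions from $\{\rho<\zeta,\ \rho\ne\rho^{+}\}$. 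On $\{\zeta\le\rho\}$ we have $V_{\zeta}\ge\xi$, so the first term is at least $E[\xi\widehat{L}_{\zeta}1_{\{\zeta\le\rho\}}]$.

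\textbf{Step 3 (main obstacle): bound $V_{\rho^{+}}$ on $\{\rho^{+}=\rho<\zeta\}$.} We need $V_{\rho^{+}}\ge\xi_{\rho^{+}}$ there. The idea is that the strategy $h$ restricted to $[\rho^{+},\zeta]$ superreplicates $\xi$ starting from wealth $V_{\rho^{+}}$, with wealth staying nonnegative. Since $\xi_{\rho^{+}}$ is by definition the price at time $\rho^{+}$, i.e.\ the smallest such initial fortune, this yields $V_{\rho^{+}}\ge\xi_{\rho^{+}}$ on that event.

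Making this rigorous depends on how the conditional price at a random time is defined; we would take it as an essential infimum over $\mathcal{F}_{\rho^{+}}$-measurable initial fortunes. The delicate point is to check that the shifted strategy $h1_{(\rho^{+},\zeta]}$, started with the $\mathcal{F}_{\rho^{+}}$-measurable fortune $V_{\rho^{+}}$, is admissible in the conditional sense. This follows from $V\ge0$. Combining Steps 2 and 3 gives $\xi_{0}\le a$. The integrability assumptions only ensure that the expectations in $\xi_{0}$ are finite.
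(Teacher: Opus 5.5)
Your proposal is correct and follows the same route as the paper. The deflated wealth $\widehat{L}V(a,h)$ stopped at $\rho\wedge\zeta$ is a nonnegative local martingale, so $a\ge E[\widehat{L}_{\rho\wedge\zeta}V_{\rho\wedge\zeta}]$; this random variable dominates $\xi\widehat{L}_{\zeta}1_{\{\zeta\le\rho\}}+\xi_{\rho^{+}}\widehat{L}_{\rho^{+}}1_{\{\rho=\rho^{+}<\zeta\}}$ via $V_{\zeta}\ge\xi$, the minimality of $\xi_{\rho^{+}}$ (invoked in the paper exactly as in your Step 3), and nonnegativity on the remaining event. Your localization by $\theta^{1}_{\varepsilon}$ with Fatou's lemma, and your use of $\kappa=0$ on $[0,\rho^{*}]$ to cancel the drift, make explicit what the paper asserts in one line; the paper calls $(a+(h\cdot S))\widehat{L}^{\rho^{+}}$ a local martingale, whereas only the version stopped at $\rho\wedge\zeta$, which you verify, is needed.
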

\begin{proof}
Let $\widehat{N}_{t} := E[\xi \widehat{L}_{\zeta} 1_{\{\zeta \le \rho\}}+\xi_{\rho^{+}}\widehat{L}_{\rho^{+}} 1_{\{\rho = \rho^{+} < \zeta\}} | \mathcal{F}_{t}]$
and $N_{t} := (a+(h\cdot S)_{t}) \widehat{L}_{t}^{\rho^{+}}$ for $t\ge 0$.
Then, $\widehat{N}$ is a martingale
and $N$ is a nonnegative continuous local martingale.
The minimality of $\xi_{\rho^{+}}$ yields $\xi_{\rho^{+}} \le a+(h\cdot S)_{\rho^{+}}$ on $\{\rho^{+} < \zeta\}$. 
Hence, we obtain
\begin{equation}
\widehat{N}_{\rho \wedge \zeta}
= \xi \widehat{L}_{\zeta} 1_{\{\zeta \le \rho\}}+\xi_{\rho^{+}}\widehat{L}_{\rho^{+}} 1_{\{\rho = \rho^{+} < \zeta\}}
\le N_{\zeta} 1_{\{\zeta \le \rho\}}+N_{\rho}1_{\{\rho = \rho^{+} < \zeta\}}
\le N_{\zeta \wedge \rho}
\end{equation}
by the nonnegative of $N$.
The conclusion follows from $\xi_{0} = \widehat{N}_{0} = E[\widehat{N}_{\zeta \wedge \rho} | \mathcal{F}_{0}] \le E[N_{\zeta \wedge \rho} | \mathcal{F}_{0}] \le a$.
\end{proof}

\subsection{Contingent claim valuation under increasing profits}
\label{sec:ip}
In this subsection, we consider the price of the contingent claim that pays $\xi$ at a finite stopping time $\zeta$ under increasing profit
with the condition $\widehat{K}_{0}^{\zeta} < \infty$.

We remark that
$c_{t} c_{t}^{+}$ is the orthogonal projection operator onto the linear subspace $\mathrm{Im}(c_{t})=\mathrm{Ker}(c_{t})^{\bot}$.
In addition,
$(c_{t}^{ij})_{1 \le i \le d} \in \mathrm{Im}(c_{t})$ for $j = 1,\dots,d$
and
$\nu_{t} \in \mathrm{Ker}(c_{t})$
yield $(h-c^{+} c h) \cdot \left<M,M\right>=0$
and $(c c^{+} h) \cdot \kappa=0$, respectively, for $h \in L(S)$.
Hence, we obtain
\begin{equation}
(c c^{+} h) \cdot S
= (c c^{+} h) \cdot (M+  \lambda \cdot \left<M,M\right>)
  = h \cdot (M+ \lambda \cdot \left<M,M\right>)
  = h \cdot (S-\kappa). \label{eq:orthogonalIm}
\end{equation}

Although the stochastic exponential $\widehat{Z}$ and its reciprocal
$\widehat{G}$ are well-defined up to $\zeta$
under the condition $\widehat{K}_{0}^{\zeta} < \infty$,
we cannot use them as the minimal martingale deflator
and the num\'{e}raire portfolio, respectively,
in the presence of increasing profit.
Instead, we introduce an imitation of the num\'{e}raire portfolio
$1/L_{t}^{n} \in (0,\infty]$ for $n \ge 0$,
where $L^{n}:=(L_{t}^{n})_{t \ge 0}$ is defined as
\begin{equation}
L_{t}^{n} := \widehat{L}_{t}^{\rho^{1}}\mathrm{e}^{-n |\kappa|_{t\wedge \rho^{1}}}
\end{equation}
for $t \in [0,\infty)$.
While neither $L^{n}$ nor the product $S^{i}L^{n},i=1,\dots,d,$ are local martingales,
$1/L^{n}$ is tradable up to $\zeta$:
\begin{equation}
\frac{1}{L_{t \wedge \zeta}^{n}} = 1 + \int_{0}^{t \wedge \zeta} \frac{1}{\widehat{L}_{u}^{n}} \lambda_{u}^{n} dS_{u},
\end{equation}
with $\lambda_{t}^{n}:=(c_{t}c_{t}^{+}\lambda_{t}+(n /|\nu_{t}|)\nu_{t})1_{\{t \le \zeta\}}$.
More generally, 
the multiplication by $1/L^{n}$ transforms a stochastic integral
with respect to $M$ to that with respect to $S$.
\begin{lemma}
\label{lem:tradable}
Suppose that $\widehat{K}_{0}^{\zeta} < \infty$.
For $h \in L(M)$,
there exists $h^{n} \in L(S)$ such that $(h \cdot M)_{t \wedge \zeta}/L_{t \wedge \zeta}^{n} = (h^{n} \cdot S)_{t \wedge \zeta}$
for $t \ge 0$.
\end{lemma}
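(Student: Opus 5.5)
Set $X := h\cdot M$ and apply It\^o's product rule to $X/L^{n}$ on $[0,\zeta]$. First I check that $1/L^{n}$ is finite and continuous there. On $\{t<\rho^{1}\}$ we have $\widehat{L}_{t}=\widehat{Z}_{t}$, which equals $\exp(-\lambda\cdot M-\frac12\widehat{K})$. The hypothesis $\widehat{K}_{0}^{\zeta}<\infty$ gives $\zeta<\rho^{1}$, or at worst $\zeta\le\rho^{1}$ with $\widehat{K}_{\zeta}<\infty$, so $\widehat{L}$ stays strictly positive and continuous on $[0,\zeta]$. Hence $1/L^{n}_{t}=\exp\bigl(\lambda\cdot M_{t}+\tfrac12\widehat{K}_{t}+n|\kappa|_{t}\bigr)$ for $t\le\zeta$. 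It\^o's formula then gives
\begin{equation}
d\Bigl(\frac{1}{L^{n}}\Bigr)=\frac{1}{L^{n}}\bigl(\lambda\,dM+d\widehat{K}+n\,d|\kappa|\bigr)
=\frac{1}{L^{n}}\bigl((cc^{+}\lambda)\,d(M+\lambda\cdot\langle M,M\rangle)+n\,d|\kappa|\bigr).
\end{equation}
Here $d\widehat{K}=\lambda^{\top}d\langle M,M\rangle\lambda$, and $\lambda\cdot M=(cc^{+}\lambda)\cdot M$ because $\lambda-cc^{+}\lambda\in\mathrm{Ker}(c)$. Since $n\,d|\kappa|=(n\nu/|\nu|)\,d\kappa$ (with $0/0:=0$) and $cc^{+}\lambda\cdot\kappa=0$ by \eqref{eq:orthogonalIm}, this is exactly the stated representation $d(1/L^{n})=(1/L^{n})\lambda^{n}\,dS$.

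Next, for $t\le\zeta$ the product rule gives
\begin{equation}
d\Bigl(\frac{X}{L^{n}}\Bigr)=\frac{1}{L^{n}}h\,dM+X\,d\Bigl(\frac{1}{L^{n}}\Bigr)+d\Bigl\langle X,\frac{1}{L^{n}}\Bigr\rangle
=\frac{1}{L^{n}}\Bigl(h\,dM+h^{\top}d\langle M,M\rangle\,\lambda\Bigr)+\frac{X}{L^{n}}\lambda^{n}\,dS .
\end{equation}
The bracket term is $(1/L^{n})h^{\top}d\langle M,M\rangle\lambda$, because only the $\lambda\cdot M$ part of $1/L^{n}$ contributes. The first group equals $(1/L^{n})\,h\,d(S-\kappa-\ldots)$ in the form $h\,d(M+\lambda\cdot\langle M,M\rangle)$, and by \eqref{eq:orthogonalIm} this is $(cc^{+}h)\,dS$. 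We may replace $h$ by $cc^{+}h$ without changing $h\cdot M$, since $(h-cc^{+}h)\cdot M$ has zero quadratic variation. I would therefore define
\begin{equation}
h^{n}_{t}:=\Bigl(\frac{1}{L^{n}_{t}}\,c_{t}c_{t}^{+}h_{t}+\frac{X_{t}}{L^{n}_{t}}\lambda^{n}_{t}\Bigr)1_{\{t\le\zeta\}} ,
\end{equation}
which yields $(h\cdot M)_{t\wedge\zeta}/L^{n}_{t\wedge\zeta}=(h^{n}\cdot S)_{t\wedge\zeta}$, as both sides vanish at $0$.

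The main obstacle is justifying $h^{n}\in L(S)$, i.e.\ that each piece is $S$-integrable in the sense of Jacod--Shiryaev, rather than merely formally manipulating differentials. The processes $1/L^{n}$ and $X$ are continuous and adapted on $[0,\zeta]$, hence locally bounded there. It is then enough to check $cc^{+}h\in L(S)$ and $\lambda^{n}1_{[0,\zeta]}\in L(S)$.

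For $\lambda^{n}$, I would check directly the two required bounds:
\begin{equation}
\int_{0}^{\zeta}(\lambda^{n})^{\top}d\langle M,M\rangle\,\lambda^{n}=\widehat{K}_{\zeta}<\infty,\qquad
\int_{0}^{\zeta}\bigl|(\lambda^{n})^{\top}dA\bigr|\le\widehat{K}_{\zeta}+n|\kappa|_{\zeta}<\infty .
\end{equation}
The first uses the hypothesis $\widehat{K}_{0}^{\zeta}<\infty$. For $cc^{+}h$, $h\in L(M)$ gives $M$-integrability. For the drift part, the Cauchy--Schwarz inequality gives $\int|h^{\top}c\lambda|\,d\Lambda\le(\int h^{\top}ch\,d\Lambda)^{1/2}\widehat{K}_{\zeta}^{1/2}<\infty$, and the $\kappa$ part vanishes. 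So again $\widehat{K}_{0}^{\zeta}<\infty$ is the decisive ingredient. Finally, multiplying by locally bounded predictable factors preserves $S$-integrability, and the associativity of stochastic integrals justifies the product-rule computation.
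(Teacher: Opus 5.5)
Your proof is correct and follows the paper's argument. Both apply It\^o's product rule to $(h\cdot M)\cdot(1/L^{n})$ and use the representation $d(1/L^{n})=(1/L^{n})\lambda^{n}\,dS$, the bracket $d\langle h\cdot M,1/L^{n}\rangle=(1/L^{n})h^{\top}d\langle M,M\rangle\lambda$, and the identity $(cc^{+}h)\cdot S=h\cdot(S-\kappa)$, arriving at the same integrand $h^{n}=(1/L^{n})\bigl((h\cdot M)\lambda^{n}+cc^{+}h\bigr)1_{[0,\zeta]}$. The paper only asserts $h^{n}\in L(S)$ and the representation of $1/L^{n}$, whereas you derive the representation and verify integrability via $\widehat{K}_{\zeta}<\infty$, Cauchy--Schwarz for the drift of $cc^{+}h$, and local boundedness on $[0,\zeta]$.
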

\begin{proof}
A straightforward computation leads to 
\begin{align}
(h \cdot M)/L^{n}
&= \int_{0}^{\cdot}(h \cdot M)_{u} d(1/L_{u}^{n}) + \int_{0}^{\cdot} 1/L_{u}^{n}h_{u} dM_{u} + \int_{0}^{\cdot} h_{u}^{\top} d\left<M,1/L^{n}\right>_{u} \nonumber\\
&= \int_{0}^{\cdot}(h \cdot M)_{u} d(1/L_{u}^{n}) + \int_{0}^{\cdot} 1/L_{u}^{n}h_{u} dM_{u} + \int_{0}^{\cdot} 1/L_{u}^{n}h_{u}^{\top} d\left<M,M\right>_{u}\lambda_{u}^{n} \nonumber\\
&= \int_{0}^{\cdot} 1/L_{u}^{n}\left((h \cdot M)_{u}\lambda_{u}^{n}
+ c_{u}c_{u}^{+}h_{u} \right) dS_{u},
\end{align}
where we have used \eqref{eq:orthogonalIm} in the last step.
Then, 
$h^{n} := 1/L^{n}((h \cdot M)\lambda^{n}+ cc^{+}h)1_{[0,\zeta]}$
belongs to $L(S)$.
\end{proof}

The following proposition states the price of the contingent claim in the presence of increasing profits
is that of the contingent claim that pays $\xi$ at time $\zeta$
only if the increasing profits is not realized.
\begin{proposition}
\label{prop:increasingprofit}
Suppose that Assumptions \ref{ass:trivial} and \ref{ass:prp} hold,
that $\widehat{K}_{0}^{\zeta} < \infty$
and that $\xi \widehat{L}_{\zeta}$ is integrable with respect to $P$.
Then, the price of the contingent claim that pays $\xi$ at time $\zeta$ is
\begin{equation}
E\left[\xi 1_{\{\zeta \le \rho^{*} \}} \widehat{L}_{\zeta}\right].\label{eq:price}
\end{equation}
\end{proposition}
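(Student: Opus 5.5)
We argue the two inequalities separately.

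\emph{Lower bound.} Since $\widehat{K}_{0}^{\zeta}<\infty$, we have $\zeta<\rho^{1}$ a.s. Under this condition $\rho^{+}$ does not intervene before $\zeta$, so $\rho\wedge\zeta=\rho^{*}\wedge\zeta$. Hence Proposition \ref{prop:minimality} applies with the second term in $\xi_{0}$ vanishing (the event $\{\rho=\rho^{+}<\zeta\}$ is null, or can be treated as part of $\{\rho^{*}<\zeta\}$ where no payment is required). It shows that any $a$ with $h\in\mathcal{A}_{a}$ and $\xi\le V_{\zeta}(a,h)$ satisfies $a\ge E[\xi 1_{\{\zeta\le\rho^{*}\}}\widehat{L}_{\zeta}]$. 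The key point is that $(a+h\cdot S)\widehat{L}^{\rho^{*}}$ is a nonnegative local martingale up to $\rho^{*}$, because $\kappa$ vanishes before $\rho^{*}$.

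\emph{Upper bound.} We construct near-optimal superreplicating strategies. Set $x_{0}:=E[\xi 1_{\{\zeta\le\rho^{*}\}}\widehat{L}_{\zeta}]$ and let $N_{t}:=E[\xi 1_{\{\zeta\le\rho^{*}\}}\widehat{L}_{\zeta}\mid\mathcal{F}_{t}]$. By Assumption \ref{ass:trivial}, $N_{0}=x_{0}$. By Assumption \ref{ass:prp}, $N=x_{0}+h\cdot M$ for some $h\in L(M)$, and $N$ is continuous by Remark \ref{rem:prp}. For $\varepsilon>0$ we use the candidate wealth
\begin{equation}
V^{n}_{t}:=\frac{x_{0}+\varepsilon}{L^{n}_{t\wedge\zeta}}\cdot\frac{N_{t\wedge\zeta}+\varepsilon}{x_{0}+\varepsilon}\quad\text{(roughly)}, \qquad\text{i.e. } V^{n}=\frac{N+\varepsilon}{L^{n}}.
\end{equation}
By Lemma \ref{lem:tradable} together with the expression for $1/L^{n}$, $V^{n}$ equals $x_{0}+\varepsilon+(h^{n}\cdot S)$ up to $\zeta$ for some $h^{n}\in L(S)$. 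It is nonnegative, so $h^{n}\in\mathcal{A}_{x_{0}+\varepsilon}$.

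We then check the terminal value. On $\{\zeta\le\rho^{*}\}$ we have $|\kappa|_{\zeta}=0$ and $L^{n}_{\zeta}=\widehat{L}_{\zeta}$, so $V^{n}_{\zeta}\ge\xi\widehat{L}_{\zeta}/\widehat{L}_{\zeta}=\xi$. On $\{\rho^{*}<\zeta\}$ we have $|\kappa|_{\zeta}>0$, so $V^{n}_{\zeta}\ge\varepsilon e^{n|\kappa|_{\zeta}}/\widehat{L}_{\zeta}$, which tends to $\infty$ as $n\to\infty$.

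The main obstacle is that a single $n$ does not dominate $\xi$ uniformly on $\{\rho^{*}<\zeta\}$, since $|\kappa|_{\zeta}$ can be arbitrarily small there. The first remedy to try is to let the increasing-profit leverage be path-dependent rather than fixed. Put $n_{t}$ on $\nu/|\nu|$ only after $\rho^{*}$, in the form of a strategy that invests all current wealth into the increasing profit direction with unbounded intensity. The idea is that any strictly positive $\kappa$-increment over $(\rho^{*},\rho^{*}+\delta)$ can be levered arbitrarily, since $\kappa$ is riskless: $(\nu/|\nu|)\cdot S=(\nu/|\nu|)\cdot\kappa=|\kappa|$ is increasing. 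If $\xi$ is bounded by $C$, it suffices to reach $C$ by $\zeta$. If $\zeta$ could equal $\rho^{*}$ with no room left, then $\{\rho^{*}=\zeta\}\subset\{\zeta\le\rho^{*}\}$ is already covered by the first term. For the remaining set we would use $V^{n}$ for a random $n$ chosen via a localisation, for instance $n$ large on $\{|\kappa|_{\zeta}>1/k\}$, and let $k\to\infty$ to absorb the residual set with probability tending to $0$. If only the integrability hypothesis (rather than boundedness) is available, we would accept the infimum characterisation. The price is then attained as a limit $x_{0}+\varepsilon$ with $\varepsilon\downarrow0$, using the admissible-variance argument to control the residual set.
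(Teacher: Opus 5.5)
\textbf{Lower bound.} Your lower bound is fine and coincides with the paper's. Under $\widehat{K}_{0}^{\zeta}<\infty$ you get $\zeta\le\rho^{1}\le\rho^{+}$ (not necessarily $\zeta<\rho^{1}$, but that is harmless). So the second term of $\xi_{0}$ in Proposition \ref{prop:minimality} vanishes and $\{\zeta\le\rho\}=\{\zeta\le\rho^{*}\}$.

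\textbf{Upper bound: the gap.} With $V^{n}=(N+\varepsilon)/L^{n}$ and $N$ pricing only $\xi 1_{\{\zeta\le\rho^{*}\}}$, your terminal wealth on $\{\rho^{*}<\zeta\}$ is $\varepsilon e^{n|\kappa|_{\zeta}}/\widehat{L}_{\zeta}$. As you note, no fixed $n$ makes this dominate $\xi$ almost surely, and none of your remedies closes the gap.
\begin{itemize}
\item Taking $n$ ``large on $\{|\kappa|_{\zeta}>1/k\}$'' is not a trading strategy. The parameter $n$ enters $L^{n}$ from time $0$, while that event is only revealed at $\zeta$.
\item Superreplication must hold $P$-a.s. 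The residual event $\{0<|\kappa|_{\zeta}\le 1/k\}$ has small but positive probability, so it cannot be ``absorbed'' by letting $k\to\infty$.
\item The ``unbounded intensity in the $\nu/|\nu|$ direction after $\rho^{*}$'' is never constructed. A pure bet on $|\kappa|$ cannot in general guarantee a prescribed gain, because $|\kappa|_{\zeta}$ may be arbitrarily small on $\{\rho^{*}<\zeta\}$; on such paths you still have to hedge $\xi$ itself.
\item You invoke boundedness of $\xi$, which is not a hypothesis; only integrability of $\xi\widehat{L}_{\zeta}$ is assumed.
\item The closing appeal to an ``admissible-variance argument'' is not an argument.
\end{itemize}

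\textbf{The missing idea.} Let the penalised density price the whole claim, instead of patching $\{\rho^{*}<\zeta\}$ separately.
\begin{itemize}
\item Take $N^{n}_{t}:=E[\xi L^{n}_{\zeta}\mid\mathcal{F}_{t}]$. This is a martingale because $\xi L^{n}_{\zeta}\le\xi\widehat{L}_{\zeta}$, which is integrable.
\item Set $V^{n}:=N^{n}/L^{n}$. By Assumption \ref{ass:prp} and Lemma \ref{lem:tradable}, together with the representation of $1/L^{n}$, you get $V^{n}_{\cdot\wedge\zeta}=N^{n}_{0}+h^{n}\cdot S$ for some $h^{n}\in L(S)$.
\item This wealth is nonnegative, and $V^{n}_{\zeta}=\xi$ on all of $\Omega$ because $L^{n}_{\zeta}>0$. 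Hence $h^{n}\in\mathcal{A}_{N^{n}_{0}}$ replicates $\xi$ exactly, with no $\varepsilon$ and no residual set.
\item The cost is $N^{n}_{0}=E[\xi\widehat{L}_{\zeta}e^{-n|\kappa|_{\zeta}}]$. By dominated convergence it decreases to $E[\xi\widehat{L}_{\zeta}1_{\{|\kappa|_{\zeta}=0\}}]=E[\xi\widehat{L}_{\zeta}1_{\{\zeta\le\rho^{*}\}}]$.
\end{itemize}
This is the paper's proof. On paths where $|\kappa|$ barely moves, $V^{n}$ hedges $\xi$ in the usual way rather than relying on the arbitrage, which is exactly what your construction cannot do. Your ``switch at $\rho^{*}$'' idea can be rescued only by running this same construction from $\rho^{*}$, with an $\mathcal{F}_{\rho^{*}}$-measurable choice of $n$, as in step 2) of the proof of Theorem \ref{thm:main}.
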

\begin{proof}
Let $N_{t}^{n}:=E[\xi L_{\zeta}^{n} | \mathcal{F}_{t}]$
and $V_{t}^{n}:=E[\xi L_{\zeta}^{n} | \mathcal{F}_{t}]/L_{t\wedge \zeta}^{n}$
for each $n$.
Then, according to Assumption \ref{ass:prp} and Lemma \ref{lem:tradable},
there exists a predictable process $h^{n} \in L(S)$ such that
\begin{equation}
V_{t\wedge \zeta}^{n} = N_{0}^{n} + \int_{0}^{t\wedge \zeta} h_{u}^{n} dS_{u}.
\end{equation}
This formula represents an admissible trading strategy that replicates the payoff $\xi$,
because of $V_{t}^{n} \ge 0$ and $V_{\zeta}^{n}=\xi$.
The wealth process $V^{n}$ of this strategy 
converges decreasingly 
as $n \nearrow \infty$ by the monotone convergence theorem:
\begin{equation}
V_{t}^{n} = E\left[\xi \frac{\widehat{L}_{\zeta}}{\widehat{L}_{t \wedge \zeta}} \mathrm{e}^{-n (|\kappa|_{\zeta}-|\kappa|_{t})_{+}} \; \middle | \; \mathcal{F}_{t}\right]
\searrow
E\left[\xi \frac{\widehat{L}_{\zeta}}{\widehat{L}_{t \wedge \zeta}} 1_{\{\zeta \le \rho_{t}^{*} \}} \; \middle | \; \mathcal{F}_{t}\right],
\end{equation}
where $\rho_{t}^{*} := \inf \{u > t : |\kappa|_{u} > |\kappa|_{t} \}$.
The minimality follows from Proposition \ref{prop:minimality}.
\end{proof}

Jarrow and Protter \cite{Jarrow2005} proved the completeness of the market model in the presence of increasing profit
under the assumption 
that a unique probability $P^{*}$ equivalent to $P$ exists
such that 
the process $N:=S_{0}+M+ \lambda \cdot \left<M,M\right>$
is a $P^{*}$ local martingale (Jarrow and Protter \cite[Theorem 3.2]{Jarrow2005}).
In the proof, they demonstrated a replicating strategy with an initial fortune $E^{*}[\xi]$, where $E^{*}$ is the expectation operator with respect to $P^{*}$.
However, they did not discuss any optimality of the strategy
and their strategy does not take advantage of the existence of increasing profit.
By contrast, our strategies make use of increasing profit,
which may lead to smaller initial fortunes than those of Jarrow and Protter \cite{Jarrow2005}.
See Example \ref{ex:rgbm} below for a specific case.

We provide three examples of the financial market models comprising a
savings account and a single stock below.
Although Proposition \ref{prop:increasingprofit} does not state that there exists an admissible strategy that replicates the payoff with initial fortune \eqref{eq:price},
the optimal strategies are explicitly obtained in these specific cases.

In the first example, the stock price process is given as a reflected geometric Brownian motion,
which attracts much attention from the practical point of view (Buckner et~al. \cite{Buckner2022}).
The prices for call options are those of the corresponding barrier options that are knock-out at the reflection boundary.
We derive the price and replicating strategy using a partial differential equation argument without relying Proposition \ref{prop:increasingprofit}.
\begin{example}
\label{ex:rgbm}
Let $S$ be a reflected geometric Brownian motion with a lower reflecting boundary $b \in (0,S_{0})$.
More precisely, 
for a given Brownian motion $\beta$ with $\beta_{0}=0$,
and two constants $\mu$ and $\sigma \neq 0$,
the stock price process $S$ is a continuous adapted process that satisfies
$S_{t} \ge b$
and
\begin{equation}
S_{t} = S_{0} + \int_{0}^{t} \sigma S_{u} d\beta_{u} + \int_{0}^{t} \mu S_{u} du + L_{t},
\end{equation}
where 
$L$ is a continuous, adapted and nondecreasing process with $L_{0}=0$
and $\int_{0}^{t} 1_{\{S_{u} > b\}} dL_{u} = 0$.
Then, $L$ admits an expression $L_{t} = l_{t}^{b}/2$,
where $l_{t}^{b}$ is the local time of $S$ at $b$:
\begin{equation}
l_{t}^{b} := \lim_{\varepsilon \downarrow 0} \frac{1}{\varepsilon} \int_{0}^{t} 1_{(b,b+\varepsilon]}(S_{u}) d\left<S,S\right>_{u}.
\end{equation}
Buckner et~al. \cite{Buckner2022} pointed out that the financial market model $(1,S)$ violates the NIP condition (Buckner et~al. \cite[Proposition 3.1]{Buckner2022})
and then that the contingent claim valuations using the NA1 and NFLVR conditions cannot be applied.

We consider the price of a call option with strike $K$ and maturity $T$ for the stock.
Let $\tilde{S}$ be the geometric Brownian motion with no drift:
\begin{equation}
\tilde{S}_{t} = S_{0} + \int_{0}^{t} \sigma \tilde{S}_{u} d\beta_{u}
= S_{0} \mathrm{e}^{-\sigma^{2} t/2 + \sigma \beta_{t}}
\end{equation}
and
$C(\tau,s,b)$ be the price of the knock-out call option for the stock price process $\tilde{S}$ with $\tilde{S}_{0}=s$, knock-out level $b$, strike $K$ and maturity $\tau$:
\begin{equation}
C(\tau,s,b) := E\left[(\tilde{S}_{\tau}-K)_{+} : \tilde{H}_{b} > \tau \right],
\end{equation}
where $\tilde{H}_{b} = \inf \{t : \tilde{S}_{t} = b\}$.
Then, $C(\cdot,\cdot,b)$ is a solution to the partial differential equation:
\begin{align}
\left\{ \begin{array}{l}
C_{\tau} = \frac{1}{2} \sigma^{2} S^{2} C_{ss}, \\
C(\tau,s) = (s-K)_{+},\\
C(\tau,b) = 0. \label{eq:bse}
\end{array} \right.
\end{align}
An application of Ito formula to $C(T-t,S_{t},b)$ yields
\begin{equation}
(S_{T}-K)_{+}
= C(T-t,S_{t},b) + \int_{t}^{T} C_{s}(T-u,S_{u},b) dS_{u},
\end{equation}
which implies an admissible replicating strategy.
More precisely,
the self-financing strategy with initial fortune $C(T,S_{0},b)$
and $C_{S}(T-t,S_{t},b)$-unit of the stock at time $t \in [0,T]$
replicates the call option,
whose wealth process satisfies $C(T-t,S_{t},b) \ge 0$.
The minimality of the initial fortune follows from Proposition \ref{prop:minimality}.

Let $C(\cdot,\cdot,b,c)$ the solution to \eqref{eq:bse}
with the third condition replaced by $C(\tau,b,b) = -c$ for $c > 0$.
Then, the strategy based on $C(\cdot,\cdot,b,c)$
replicates the call option with smaller initial fortune than our price $C(\cdot,\cdot,b)$.
However, we rule out this strategy in Definition \ref{def:price},
otherwise the price could be arbitrarily small.

We remark that $C(\cdot,\cdot,0)$ is the Black--Scholes price for
the European call option without a barrier; that is, the solution to \eqref{eq:bse}
without the boundary condition $C(\tau,b,b) = 0$.
The strategy based on $C(\cdot,\cdot,0)$ does replicate the payoff,
which is the strategy proposed by Jarrow and Protter \cite{Jarrow2005}
(see the discussion following Jarrow and Protter \cite[Theorem 4.3]{Jarrow2005}).
The price proposed by Veestraeten \cite{Veestraeten2008}, which was based on a
wrong argument,
is a solution to \eqref{eq:bse} for which the boundary condition at $s=b$
is replaced with $C_{s}(\tau,b)=0$ (see (9) of Veestraeten \cite{Veestraeten2008}).
The strategy based on this price also replicates the payoff.
However, these prices are larger than ours $C(\cdot,\cdot,b)$.

The same argument as above can be applied to the case where $S$ satisfies
\begin{equation}
S_{t} = S_{0} + \int_{0}^{t} \sigma S_{u} d\beta_{u} - t + L_{t}
\end{equation}
with the reflection boundary $b=0$.
Then, the price of the call option at time $0$ is 
\begin{equation}
E[(\tilde{S}_{T}-K)_{+} 1_{\{\tilde{H}_{0} > T\}}]
=E[(\tilde{S}_{T}-K)_{+}] =  C(T,S,0),
\end{equation}
because of $P[\tilde{H}_{0} = \infty]=1$.
In this case, the presence of increasing profit
does not reduce the initial fortune for the replication
from the Black--Scholes price.
\end{example}

The second example is involved with a Brownian motion and its local time at the level $0$.
The level at which the stock price singularly behaves
decreases or increases at each time when the price hits that level.
This example contains the three-dimensional Bessel process as a special case.
In this example, the prices of European options are reduced to the knock-out option price under Bachelier's model.
The main difference from Example \ref{ex:rgbm} is 
that the position in the stock of the optimal replicating strategy has to be maintained
when the stock price spends in the boundary level.

\begin{example}
\label{ex:emery-perkins}
Let $S$ be defined as follows:
\begin{equation}
S_{t} = S_{0} + |\beta_{t}|-|\beta_{0}| + (\alpha-1) L_{t}^{0}(\beta)
= S_{0} + \int_{0}^{t} \mathrm{sign}(\beta_{u})d\beta_{u}+\alpha L_{t}^{0}(\beta), \label{eq:emery-perkins}
\end{equation}
where $\alpha \ge 1$ for simplicity,
$\beta$ is a Brownian motion with $|\beta_{0}| \le S_{0}$
and $L^{0}(\beta)$ is its local time at the level $0$.
We remark that $S$ is a reflected Brownian motion at the level $S=S_{0}-|\beta_{0}|$ if $\alpha=1$,
and $S$ is a three-dimensional Bessel process conditionally on $\inf_{t \ge 0} S_{t} =S_{0}-|\beta_{0}|$ if $\alpha=2$.
In addition,
the filtration generated by $\beta$ is strictly larger than that by $S$
if and only if $\alpha=2$ (see Revuz and Yor \cite[Exercise VI.3.18]{revuzyor}).

We consider the contingent claim that pays $g(S_{T})$ at time $T$.
According to Proposition \ref{prop:increasingprofit}, 
the price at time $t$ is 
\begin{equation}
V_{t} = E\left[g(S_{T})1_{\{T \le \rho_{t}^{*}\}}\; \middle | \; \mathcal{F}_{t} \right]
= W_{S_{t}}\left[g(X_{T-t})1_{\{T-t \le \tau_{j_{t}} \}} \right]
= \Psi^{g}(T-t,S_{t},j_{t}),
\end{equation}
where $j_{t} = S_{0}-|\beta_{0}|+(\alpha-1)L_{t}^{0}(\beta)$.
See Section \ref{sec:bachelier} for the notation.
Assume that $\Psi^{g}$ is sufficiently smooth in all variables.
Then, an application of Ito formula and \eqref{eq:heat} yield
\begin{align}
g(S_{T}) &= \Psi^{g}(T,S_{0},j_{0})
+ \int_{0}^{T} \Psi_{x}^{g}(T-u,S_{u},j_{u}) dS_{u}
+ \int_{0}^{T} \Psi_{j}^{g}(T-u,j_{u},j_{u}) dj_{u}\\
&= \Psi^{g}(T,S_{0},j_{0})
+ \int_{0}^{T} \left(\Psi_{x}^{g}(T-u,S_{u},j_{u})
+\frac{\alpha-1}{\alpha} \Psi_{j}^{g}(T-u,j_{u},j_{u}) 1_{\{\beta_{u}=0\}}\right)dS_{u},
\end{align}
which implies a replicating strategy.
\end{example}

The third example is a skew Brownian motion, which has been proposed for option pricing owing to its ability to reproduce empirical skewness of asset returns.
However, this model admits increasing profit as Rossello \cite{Rossello2012} pointed out.
In this example, the prices of European options are reduced to the knock-out price under Bachelier's model as the previous example.
The difference is that
the stock price can be above or below
the level where the stock price singularly behaves.
\begin{example}
Let $S$ be a skew Brownian motion with $\alpha \in (0,1) \setminus \{1/2\}$ for $b>0$; that is, $S$ is the unique adapted stochastic process that satisfies
\begin{equation}
S_{t} = S_{0} + \beta_{t} + (2\alpha -1) \tilde{L}_{t}(S),
\end{equation}
where $\tilde{L}(S)$ is the symmetric local time for $S$ at $0$
and $\beta$ is a Brownian motion with $\beta_{0}=0$.
Then, the time spent by $S$ on the level $0$ has Lebesgue measure zero,
because $|S|$ is a reflected Brownian motion.
Applications of Ito--Tanaka formula (Revuz and Yor \cite[Exercise VI.1.25]{revuzyor}) to $S_{t}^{+}:=\max(S_{t},0)$
and $S_{t}^{-}:=\max(-S_{t},0)$ yield
\begin{align}
S_{t}^{+} &= S_{0}^{+} + \int_{0}^{t}\left(1_{\{S_{u} > 0\}}+\frac{1}{2}1_{\{S_{u}=0\}}\right)dS_{u}+\frac{1}{2}\tilde{L}_{t}(S) \\
&= S_{0}^{+} + \int_{0}^{t}\left(1_{\{S_{u} > 0\}}+\left(\frac{1}{2}+\frac{1}{2(2\alpha-1)}\right)1_{\{S_{u}=0\}}\right)dS_{u}, \\
S_{t}^{-} &= S_{0}^{-} - \int_{0}^{t}\left(1_{\{S_{u} < 0\}}+\left(\frac{1}{2}-\frac{1}{2(2\alpha-1)}\right)1_{\{S_{u}=0\}}\right)dS_{u}.
\end{align}

We consider the contingent claim that pays $g(S_{T})$ at time $T$.
According to Proposition \ref{prop:increasingprofit}, 
the price at time $t$ is 
\begin{equation}
V_{t} = E\left[g(S_{T})1_{\{T < \rho_{t}^{*}\}}\; \middle | \; \mathcal{F}_{t} \right]
= W_{S_{t}}\left[g(X_{T-t})1_{\{T-t < \tau_{0} \}} \right]
= \Psi^{g}(T-t,S_{t},0).
\end{equation}
Let 
$\Psi^{\pm}(\tau,x) := \int_{\mathbb{R}_{\pm}} g(y)
(\mathrm{e}^{-\frac{(y-x)^{2}}{2\tau}}-\mathrm{e}^{-\frac{(y+x)^{2}}{2\tau}})dy$ for $\tau > 0$ and $x \in \mathbb{R}$.
Then an application of Ito--Tanaka formula to
$\Psi^{g}(T-t,S_{t})=\Psi^{+}(T-t,S_{t}^{+})+\Psi^{-}(T-t,-S_{t}^{-})$
and \eqref{eq:heat} yield
\begin{align}
g(S_{T})
&=
\Psi^{g}(T,S_{0},0)
+ \int_{0}^{T} \Psi_{x}^{+}(T-u,S_{u}^{+}) dS_{u}^{+}
- \int_{0}^{T} \Psi_{x}^{-}(T-u,-S_{u}^{-}) dS_{u}^{-}
\nonumber\\
&=
\Psi^{g}(T,S_{0},0)
+ \int_{0}^{T} \left(\Psi_{x}^{g}(T-u,S_{u},0)
+\frac{1}{2}\frac{1}{2\alpha-1} \Psi_{xx}^{g}(T-u,0,0) 1_{\{S_{u}=0\}}\right)dS_{u},
\end{align}
which implies a replicating strategy.
\end{example}

\subsection{Contingent claim valuation under strong arbitrage}
\label{sec:sa}
In this subsection, we consider the price of the contingent claim that pays $\xi$ at a finite stopping time $\zeta$ under strong arbitrage
with the conditions $\rho^{*} \ge \zeta$
and $\widehat{K}_{0}^{\zeta \wedge \rho^{+}} < \infty, \widehat{K}_{t+\rho^{+}}^{\zeta} < \infty$
for each $t > 0$,
which means that
a strong arbitrage opportunity occurs at most only at $\rho^{+}$ in the market model restricted on $[0,\zeta]$
and the NA1 condition holds on $[0,\zeta \wedge \rho^{+}]$ and $(\rho^{+},\zeta]$.
Then, the contingent claim valuation using the NA1 condition can be applied in the market model restricted on $[0,\zeta \wedge \rho^{+}]$.
We can focus on the situation with $\rho^{+}=0$ or $\zeta \le \rho^{+}$; that is, $\widehat{K}_{\varepsilon}^{\zeta}<\infty$.

Let $\widehat{Z}_{s,\cdot} = (\widehat{Z}_{s,t})_{t \ge 0}$ be
\begin{equation}
\widehat{Z}_{s,t}
:= \exp\left(-\int_{s \wedge t}^{t}\lambda_{u} dM_{u}-\frac{1}{2}\int_{s \wedge t}^{t} \lambda_{u}^{\top} d\left<M,M\right>_{u} \lambda_{u} \right) 1_{\{t < \rho_{s}^{1}\}},
\end{equation}
where $\rho_{s}^{1}:=\inf \{t > s : \widehat{K}_{s}^{t} = \infty\}$ for each $s > 0$,
and $\widehat{L}_{s,t} := \widehat{Z}_{s,t-}$.

\begin{proposition}
Suppose that Assumptions \ref{ass:trivial} and \ref{ass:prp} hold,
$\rho^{*} \ge \zeta$ and $\widehat{K}_{\varepsilon}^{\zeta}<\infty$,
and that $\xi \widehat{L}_{\varepsilon,\zeta}$ is integrable with respect to $P$ for each $\varepsilon > 0$.
Then, the price of the contingent claim that pays $\xi$ at time $\zeta$ is
\begin{equation}
\xi_{\rho^{+}} := \lim_{\varepsilon \downarrow 0} E \left[\xi \widehat{L}_{\varepsilon,\zeta} \; \middle | \; \mathcal{F}_{\varepsilon} \right],\label{eq:sa}
\end{equation}
provided the limit exists $P$-a.s.
\end{proposition}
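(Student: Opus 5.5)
We prove two inequalities: every superreplicating fortune is at least $\xi_{\rho^{+}}$, and for every $\delta>0$ there is a superreplicating strategy with initial fortune at most $\xi_{\rho^{+}}+\delta$. Throughout we use that $\rho^{+}=0$ in the relevant case; otherwise $\zeta\le\rho^{+}$ and the NA1 argument applies directly. Set $\xi^{\varepsilon}:=E[\xi\widehat{L}_{\varepsilon,\zeta}\,|\,\mathcal{F}_{\varepsilon}]$.

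\emph{Upper bound on the price.} Fix $\varepsilon>0$. On $[\varepsilon,\zeta]$ the hypotheses $\rho^{*}\ge\zeta$ and $\widehat{K}_{\varepsilon}^{\zeta}<\infty$ put us in the setting of Proposition \ref{prop:increasingprofit}, with the increasing-profit part absent and started at time $\varepsilon$. Define the martingale $N_{t}^{\varepsilon}:=E[\xi\widehat{L}_{\varepsilon,\zeta}\,|\,\mathcal{F}_{t}]$ for $t\ge\varepsilon$. By Assumption \ref{ass:prp}, $N^{\varepsilon}=\xi^{\varepsilon}+h\cdot M$ on $[\varepsilon,\zeta]$. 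Dividing by $\widehat{L}_{\varepsilon,\cdot}$ and using the analogue of Lemma \ref{lem:L} and Lemma \ref{lem:tradable} started at $\varepsilon$ (localizing with $\theta$-type stopping times where $\widehat{L}_{\varepsilon,\cdot}$ stays above a level), we obtain $V_{t}=N_{t}^{\varepsilon}/\widehat{L}_{\varepsilon,t}=\xi^{\varepsilon}+\int_{\varepsilon}^{t}h^{\varepsilon}_{u}dS_{u}$. This $V$ is nonnegative with $V_{\zeta}=\xi$. Hence, holding wealth $\xi^{\varepsilon}$ at time $\varepsilon$ suffices.

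\emph{Getting there from time $0$.} The remaining problem is to produce, starting at time $0$ with fortune $x>\xi_{\rho^{+}}$, a $0$-admissible wealth that exceeds $\xi^{\varepsilon}$ at time $\varepsilon$ for some small $\varepsilon$. Since $\xi^{\varepsilon}\to\xi_{\rho^{+}}$ a.s., $\xi^{\varepsilon}$ is below $x$ with high probability for small $\varepsilon$, but not surely. This is where the strong arbitrage at $\rho^{+}=0$ enters. Because $\widehat{K}_{0}^{\delta}=\infty$ for all $\delta>0$, the immediate-arbitrage construction (Fontana \cite[Lemma 4.2]{Fontana2015}, Delbaen--Schachermayer \cite{DelbaenSchachermayer1995b}) gives $0$-admissible strategies $\lambda$-proportional on $[0,\varepsilon]$ whose gains are positive a.s. on arbitrarily short intervals. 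Using $1/\widehat{L}_{s,\cdot}$ over a sequence of shrinking intervals, these gains can be made to blow up: after time $s$, $\widehat{K}_{s}^{t}\to\infty$ as $s\downarrow0$, so the wealth $x\cdot\widehat{L}_{s,s}/\widehat{L}_{s,t}$ can become arbitrarily large with probability close to one.

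I would choose a random time $\varepsilon$ and a strategy that keeps the fortune $x$ in cash and switches to the replication of $\xi$ at the first time $t\in(0,\cdot]$ at which the current wealth dominates $\xi^{t}$. I would try to show this time is a.s. positive and finite before $\zeta$. A cleaner alternative is to use the minimality of the time-$t$ price and a pathwise argument: since $\lim_{t\downarrow0}\xi^{t}=\xi_{\rho^{+}}<x$ a.s., there is a.s. some $t>0$ with $\xi^{t}<x$. Since $\xi$ is bounded, say by $c$, the strategy can stop once $\xi^{t}\le x$, and this stopping time is positive a.s. because $\mathcal{F}_{0}$ is trivial. Making this measurable and admissible (a stopping time $\sigma>0$ with $\xi^{\sigma}\le x$, where the family $\xi^{t}$ must be taken right-continuous in $t$) is the main obstacle. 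I expect to handle it by verifying that $t\mapsto\xi^{t}$ is the price process, hence a supermartingale-like object with a regular version.

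\emph{Lower bound on the price.} For the converse, let $h\in\mathcal{A}_{a}$ superreplicate $\xi$. For each $\varepsilon>0$, the argument of Proposition \ref{prop:minimality} applied on $[\varepsilon,\zeta]$, with $N_{t}=(a+(h\cdot S)_{t})\widehat{L}_{\varepsilon,t}$ a nonnegative local martingale and hence a supermartingale, gives $\xi^{\varepsilon}\le a+(h\cdot S)_{\varepsilon}$. As $\varepsilon\downarrow0$, continuity of $h\cdot S$ gives $\xi_{\rho^{+}}\le a$. Together with the upper bound this identifies the price as $\xi_{\rho^{+}}$.
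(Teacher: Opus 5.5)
Your lower bound, and your replication on $[\varepsilon,\zeta]$ from wealth $\xi^{\varepsilon}$ at time $\varepsilon$, coincide with the paper's argument. The step that bridges time $0$ to a positive time, however, has a genuine gap.

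\textbf{The immediate-arbitrage route.} Making $x/\widehat{L}_{s,t}$ large ``with probability close to one'' never gives $P$-a.s.\ domination of $\xi^{\varepsilon}$. So it produces no superreplicating strategy. The paper in fact uses no arbitrage strategy at all in this step.

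\textbf{The ``cleaner alternative''.} This has the right shape (hold cash, then switch to replication) but uses the wrong switching time. Because $\xi^{t}\to\xi_{\rho^{+}}<x$ as $t\downarrow 0$, you have $\xi^{t}<x$ for all small $t>0$. Hence $\inf\{t>0:\xi^{t}\le x\}=0$ a.s. Triviality of $\mathcal{F}_{0}$ only makes $\xi_{\rho^{+}}$ deterministic; it says nothing about positivity of this time. Switching at time $0$ means replicating from $0+$ via the improper integral $\int_{0+}$. In general this is not of the form $h\cdot S$ with $h\in L(S)$; the integrability caveat in Example \ref{ex:jarrowprotter} exhibits exactly this failure.

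\textbf{The missing idea.} For $\varepsilon\le t\le\zeta$ one has the identity $\widehat{L}_{\varepsilon,\zeta}=\widehat{L}_{\varepsilon,t}\widehat{L}_{t,\zeta}$. Since $\widehat{L}_{\varepsilon,t}$ is $\mathcal{F}_{t}$-measurable and positive, this gives $\xi^{t}=E[\xi\widehat{L}_{t,\zeta}\,|\,\mathcal{F}_{t}]=N^{\varepsilon}_{t}/\widehat{L}_{\varepsilon,t}$. Consequences:
\begin{itemize}
\item $t\mapsto\xi^{t}$ is a single continuous process $V$ on $(0,\zeta]$. No supermartingale regularization is needed, so your ``main obstacle'' disappears.
\item By the assumed a.s.\ limit, $V$ extends continuously to $[0,\zeta]$ with $V_{0}:=\xi_{\rho^{+}}$.
\item The integrands obtained from different $\varepsilon$ are consistent. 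They therefore paste into one integrand $h^{+}:=\sum_{n}h^{1/n}1_{(1/(n+1),1/n]}$ with $V_{t\wedge\zeta}-V_{0}=\int_{0+}^{t\wedge\zeta}h^{+}_{u}\,dS_{u}$. You need this pasting because your switching time is random and can be arbitrarily close to $0$.
\end{itemize}

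\textbf{The correct strategy.} Hold cash $x=\xi_{\rho^{+}}+\delta$ until the \emph{exit} time $\tau_{\delta}:=\inf\{t:|V_{t\wedge\zeta}-V_{0}|>\delta\}$; only the upward exit matters. This time is strictly positive by continuity of $V$ at $0$. Then follow $h^{+}$ on $[\tau_{\delta},\zeta]$. The wealth equals $V_{t\wedge\zeta}+\bigl(\delta-(V_{t\wedge\tau_{\delta}\wedge\zeta}-V_{0})\bigr)\ge V_{t\wedge\zeta}\ge 0$. At $\zeta$ it is at least $V_{\zeta}=\xi$. In other words, you should switch at the first time $\xi^{t}$ reaches $x$ from below, or never (ending at $\zeta$) if it does not. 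Switching at the first time $\xi^{t}$ is below $x$ is the wrong choice. Boundedness of $\xi$ plays no role here.
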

\begin{proof}
Let $N_{t}^{\varepsilon} := E [\xi \widehat{L}_{\varepsilon,\zeta} | \mathcal{F}_{t} ]$ for $t \ge 0$ and each $\varepsilon \in (0,1] \cap \mathbb{Q}$
and
$V_{t} := E [\xi \widehat{L}_{t,\zeta} | \mathcal{F}_{t} ]$
for $t > 0$.
Then, according to Assumption \ref{ass:prp},
an $(\mathcal{F}_{t})$-predictable process $f^{\varepsilon} \in L(M)$
exists such that $N_{t}^{\varepsilon}-N_{0}^{\varepsilon}=(f^{\varepsilon} \cdot M)_{t}$.
A straightforward computation shows 
that an $(\mathcal{F}_{t})$-predictable process $h^{\varepsilon} \in L(S)$ exists
such that
\begin{equation}
\frac{N_{t\wedge \zeta}^{\varepsilon}}{\widehat{L}_{\varepsilon,t\wedge \zeta}}
= N_{\varepsilon \wedge \zeta}^{\varepsilon}
  +\int_{\varepsilon}^{t\wedge \zeta} h_{u}^{\varepsilon} dS_{u}
\end{equation}
for $t \ge \varepsilon$.
We remark that $\widehat{L}_{\varepsilon,\zeta}$ is positive 
by the assumption $\widehat{K}_{\varepsilon}^{\zeta}<\infty$.
This expression shows that $V$ has a continuous version in $(0,\zeta]$
and can be continuously extended in $[0,\zeta]$ by $V_{0}:=\xi_{\tau^{+}}$.
Hereafter, we assume that the stopped process $V^{\zeta}$ is continuous.

Let $\Theta_{t} :=V_{t}^{\zeta} - V_{0}$
and $h_{t}^{+} := \sum_{n=1}^{\infty} h_{t}^{1/n}1_{(1/(n+1),1/n]}(t)+h_{t}^{1}1_{(1,\infty)}(t)$.
Then, $\Theta$ is an $(\mathcal{F}_{t})$-adapted, continuous process
vanishing at time $0$  with 
\begin{equation}
\Theta_{t} = \lim_{n \rightarrow \infty} (V_{t}^{\zeta}-N_{1/n}^{1/n})
= \int_{0+}^{t \wedge \zeta} h_{u}^{+} dS_{u}.
\end{equation}
For each $\delta > 0$,
the pair $(V_{0}+\delta,h^{+}1_{[\tau_{\delta},\zeta]})$
with $\tau_{\delta}:=\inf\{t : |\Theta_{t}| > \delta \}$
is an admissible superreplicating strategy,
which follows from
\begin{equation}
(V_{0}+\delta) + \int_{t \wedge \tau_{\delta}}^{t\wedge \zeta} h_{u}^{+} dS_{u}
= V_{0}+\Theta_{t} + (\delta - \Theta_{t \wedge \tau_{\delta}})
\ge V_{0}+\Theta_{t} 
= V_{t}^{\zeta}
\ge 0
\end{equation}
and $V_{\zeta}(\omega)= N_{\zeta}^{\varepsilon}(\omega) / \widehat{L}_{\varepsilon,\zeta}(\omega) =\xi(\omega)$
for $P$-a.s. $\omega \in \Omega$,
where $\varepsilon \in (0,\zeta(\omega)] \cap \mathbb{Q}$ is arbitrarily taken for $\omega$ with $\zeta(\omega)>0$
and $\varepsilon=0$ for $\omega$ with $\zeta(\omega)=0$.

Suppose that $a \ge 0$ and $h \in L(S)$
satisfies $\xi \le V_{\zeta}(a,h)$ and $V_{t}(a,h)\ge 0$ for $t \ge 0$.
For $\varepsilon>0$,
let $\tilde{N}_{t}^{\varepsilon}:= (a + (h \cdot S)_{t})\widehat{L}_{\varepsilon,t}$ for $t > 0$.
Then, $\tilde{N}^{\varepsilon}$ is a nonnegative, continuous supermartingale in $[\varepsilon,\zeta]$.
Hence, we obtain $\xi_{\rho^{+}} \le a$ by letting $\varepsilon \downarrow 0$ in
\begin{equation}
N_{\varepsilon}^{\varepsilon}
= E \left[\xi \widehat{L}_{\varepsilon,\zeta} \; \middle | \; \mathcal{F}_{\varepsilon} \right]
\le E \left[\tilde{N}_{\zeta}^{\varepsilon} \; \middle | \; \mathcal{F}_{\varepsilon} \right]
\le \tilde{N}_{\varepsilon}^{\varepsilon}
= a + (h \cdot S)_{\varepsilon}.
\end{equation}
\end{proof}

We investigate the case of the three-dimensional Bessel process.
\begin{example}
\label{ex:bes3}
Let the stock price process $S$ be a three-dimensional Bessel process with $S_{0} \ge 0$,
for which we consider the contingent claim that pays $g(S_{T})$ at time $T$.
This model with $S_{0} > 0$ violates the NA condition
and satisfies the NA1 condition with the num\'{e}raire portfolio $S/S_{0}$.
Then, the contingent claim valuation using the NA1 condition, or the Benchmark approach, can be applied.
The price of the contingent claim is 
\begin{equation}
E\left[g(S_{T})\frac{S_{0}}{S_{T}} \; \middle | \; \mathcal{F}_{t} \right]
= P_{S_{t}}^{(1/2)}\left[g(X_{T-t})\frac{X_{0}}{X_{T-t}}\right]
= W_{S_{t}}[g(X_{T-t}): \tau_{0} > T-t]
= \Psi^{g}(T-t,S_{t},0).
\end{equation}

Suppose that $S_{0} = 0$.
Then, a strong arbitrage occurs at time $0$; that is, $\rho^{+}=0$.
The price at time $0$ is
\begin{equation}
\lim_{t \downarrow 0} E \left[g(S_{T}) \widehat{L}_{t,T} \; \middle | \; \mathcal{F}_{t} \right]
=\lim_{t \downarrow 0} \Psi^{g}(T-t,S_{t},0)
= \Psi^{g}(T,S_{0},0) = 0.
\end{equation}

Next, we consider the case of $S_{0} > 0$, conditionally on $J_{0}=j \in (0,S_{0})$.
A strong arbitrage occurs at $\rho^{+} = \inf \{t > 0: S_{t} = j\}$,
and $S-j$ is a three-dimensional Bessel process after $\rho^{+}$ (see Section \ref{sec:williams}).
Then, the price $\xi_{\rho^{+}}$ at time $\rho^{+} < T$ is reduced to the case with $S_{0}$; that is, $\xi_{\rho^{+}}=0$.
The price at time $t < T \wedge \rho^{+}$ of the contingent claim under the consideration
is that of the contingent claim that pays $g(S_{T})1_{\{T \le \rho^{+}\}}$:
\begin{equation}
E \left[g(S_{T})1_{\{T \le \rho^{+}\}}\frac{S_{0}}{S_{T}} \; \middle | \; \mathcal{F}_{t} \right] 
= P_{S_{t}}^{(1/2)}\left[g(X_{T-t})\frac{X_{0}}{X_{T-t}} : \tau_{j} > T-t \right]
= \Psi^{g}(T-t,S_{t},j).
\end{equation}
The trading strategy in this example is also valid 
in the financial market model in Example \ref{ex:emery-perkins} with $\alpha = 2$ and $J_{0}=j$,
where the price at time $0$ is also $\Psi^{g}(T,S_{0},j)$.
This shows that the optimal strategies are not unique, while the optimal price is.
\end{example}

In the following example, we review Jarrow and Protter \cite[Section 7]{Jarrow2005}, which treated this problem
using Delbaen and Schachermayer \cite[Example 3.4]{DelbaenSchachermayer1995b}
and pointed out 
that a market model with strong arbitrage can be complete
in the sense that payoffs are restricted to a certain class.
\begin{example}
\label{ex:jarrowprotter}
Let the stock price $S$ be 
\begin{equation}
S_{t} = S_{0} + \beta_{t} + \sqrt{t}.
\end{equation}
This market model admits a strong arbitrage at time $\rho^{+}=0$.
However, if the model is restricted on $[\varepsilon,T]$ for each $\varepsilon>0$,
a unique equivalent local martingale measure $Q^{\varepsilon} := \widehat{L}_{\varepsilon,T} P$
exists, under which 
$S$ is a Brownian motion on $[\varepsilon,T]$,
where $\widehat{L}_{\varepsilon,T}$ is defined as
\begin{equation}
\widehat{L}_{\varepsilon,T}:= \exp
  \left(-\frac{1}{2}\int_{\varepsilon}^{T}\frac{d\beta_{u}}{\sqrt{u}}
    - \frac{1}{8}\int_{\varepsilon}^{T}\frac{du}{u} \right).
\end{equation}

Consider the contingent claim pays $\xi \in \mathcal{F}_{T}$ at time $T$.
Jarrow and Protter \cite{Jarrow2005} assumed that $E[\xi \widehat{L}_{\varepsilon,T}]$ converges as $\varepsilon \downarrow 0$,
which slightly differs from ours.
For each $\varepsilon > 0$,
using the martingale representation for $S$ under $Q^{\varepsilon}$, 
the pair of 
$\Lambda_{\varepsilon} \in \mathcal{F}_{\varepsilon}$
and $h^{\varepsilon} \in L(S)$
exists such that 
\begin{equation}
\xi = \Lambda_{\varepsilon} + \int_{\varepsilon}^{T} h_{u}^{\varepsilon} dS_{u}.
\end{equation}
The uniqueness shows
that the sequence $\{(\Lambda_{\varepsilon},h^{\varepsilon})\}$
determines a replicating strategy with an initial fortune
$\lim_{\varepsilon \downarrow 0}E[\xi \widehat{L}_{\varepsilon,T}]$.
The price and replication are reduced to those under Bachelier's model.

We caution about the integrability.
Let $f \in L^{2}(I,dt)$ with $I:=[0,1/2]$ such that 
$\int_{0}^{1/2} f(t)dt/\sqrt{t} = \infty$.
Then, $h_{t}(\omega) := \beta_{t}(\omega)f(t)/\sqrt{t}$
belong to $L^{2}(I,dt) \cap L^{1}(I,dt)^{c}$,
but $\int_{0+}^{1/2}h_{t}(\omega)dt/\sqrt{t}:=\lim_{\varepsilon \downarrow 0}\int_{\varepsilon}^{1/2}h_{t}(\omega)dt/\sqrt{t}$
is well defined for $P$-a.s. $\omega \in \Omega$
(see Chaumont and Yor \cite[6.1.3]{chaumont2012}, where an example of $f$ is
given as $f(t)=t^{-1/2} (\log 1/t)^{-\alpha}$ with $\alpha \in (1/2,1)$).
Then, 
\begin{equation}
\eta_{t} := 1 + \int_{0}^{t} h_{u} d\beta_{u} + \int_{0+}^{t} h_{u} \frac{du}{\sqrt{u}}
\end{equation}
is an adapted continuous process.
Consider the contingent claim that pays $\xi:=\eta_{\sigma}$
at time $T=1/2$, where $\sigma:=\inf \{t \in [0,1/2]: \eta_{t} \notin (0,2)\}$.
For each $\varepsilon > 0$, we obtain
\begin{equation}
E[\xi \widehat{L}_{\varepsilon,T}]-E[\eta_{\sigma \wedge \varepsilon}\widehat{L}_{\varepsilon,T}] 
= 
E[(\eta_{\sigma}-\eta_{\sigma \wedge \varepsilon})\widehat{L}_{\varepsilon,T}] 
= E^{\varepsilon}[(\eta_{\sigma}-\eta_{\sigma \wedge \varepsilon})] 
= 0,
\end{equation}
where $E^{\varepsilon}$ denotes the expectation operator with respect to $Q^{\varepsilon}$.
Hence, $\lim_{\varepsilon \downarrow 0}E[\xi \widehat{L}_{\varepsilon,T}]=1$
follows from that
\begin{equation}
E[\xi \widehat{L}_{\varepsilon,T}]
= E[\eta_{\sigma \wedge \varepsilon}\widehat{L}_{\varepsilon,T}]
=E\left[\eta_{\sigma \wedge \varepsilon}E\left[\widehat{L}_{\varepsilon,T} \; \middle | \; \mathcal{F}_{\varepsilon}\right] \right]
=E[\eta_{\sigma \wedge \varepsilon}]
\end{equation}
converges to $1$ as $\varepsilon \downarrow 0$, according to the bounded convergence theorem.
However, the pair $(1,h1_{[0,\sigma]})$ is not an admissible strategy, because of $h1_{[0,\sigma]} \notin L(S)$.
\end{example}

\subsection{Contingent claim valuation under arbitrage of the first kind}
\label{sec:a1}
In this subsection, we consider the price of the contingent claim
that pays $\xi$ at a finite stopping time $\zeta$ under arbitrage of the first kind with the condition $\rho^{*},\rho^{+} \ge \zeta$.
\begin{proposition}
Suppose that Assumptions \ref{ass:trivial} and \ref{ass:prp} hold, and $\rho^{*},\rho^{+} \ge \zeta$ holds.
Then, the price of the contingent claim that pays $\xi$, for which we assume $0 \le \xi < 1$, at time $\zeta$
is 
\begin{equation}
E \left[\xi \widehat{L}_{\zeta} 1_{\{\zeta \le \rho^{1} \}}\right].
\end{equation}
\end{proposition}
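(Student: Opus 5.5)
Under $\rho^{*},\rho^{+}\ge\zeta$ the formula \eqref{eq:main} collapses to the displayed expression: on $[0,\zeta]$ we have $\rho=\rho^{1}\wedge\zeta$ up to $\zeta$, and the second term vanishes because $\{\rho^{+}=\rho<\zeta\}$ is empty. The lower bound (price $\ge$ the claimed value) is then immediate from Proposition \ref{prop:minimality}, since $0\le\xi<1$ and $\widehat{L}\le$ a nonnegative local martingale started at $1$ make the integrability hypotheses hold. The work is therefore the upper bound. For every $\delta>0$ I will construct an admissible superreplicating strategy with initial fortune at most $E[\xi\widehat{L}_{\zeta}1_{\{\zeta\le\rho^{1}\}}]+\delta$.

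The construction localizes with the stopping times $\theta_{\varepsilon}^{1}$ of Lemma \ref{lem:L}. Fix $\varepsilon\in(0,1)$ and split the claim as
\[
\xi=\xi1_{\{\zeta\le\theta_{\varepsilon}^{1}\}}+\xi1_{\{\theta_{\varepsilon}^{1}<\zeta\}}.
\]
\textbf{First piece.} Put $N_{t}^{\varepsilon}:=E[\xi\widehat{L}_{\zeta\wedge\theta_{\varepsilon}^{1}}1_{\{\zeta\le\theta_{\varepsilon}^{1}\}}\mid\mathcal{F}_{t}]$. By Assumption \ref{ass:prp}, $N^{\varepsilon}=N^{\varepsilon}_{0}+f\cdot M$ for some $f\in L(M)$. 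Lemma \ref{lem:L} shows that $1/\widehat{L}^{\theta_{\varepsilon}^{1}}$ is tradable and bounded by $1/\varepsilon$. Repeating the Itô-product computation of Lemma \ref{lem:tradable} with $n=0$ and no $\kappa$ term (legitimate since $\rho^{*}\ge\zeta$) gives $h^{\varepsilon}\in L(S)$ with
\[
N^{\varepsilon}_{t\wedge\zeta\wedge\theta_{\varepsilon}^{1}}\big/\widehat{L}_{t\wedge\zeta\wedge\theta_{\varepsilon}^{1}}=N^{\varepsilon}_{0}+(h^{\varepsilon}\cdot S)_{t\wedge\zeta\wedge\theta_{\varepsilon}^{1}}.
\]
This wealth process is nonnegative, so the strategy lies in $\mathcal{A}_{N_{0}^{\varepsilon}}$, and at $\zeta$ it equals $\xi$ on $\{\zeta\le\theta_{\varepsilon}^{1}\}$.

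\textbf{Second piece.} On $\{\theta_{\varepsilon}^{1}<\zeta\}$ I use the arbitrage of the first kind, which requires a little capital. Since $\widehat{L}_{\theta_{\varepsilon}^{1}}\ge\varepsilon$ by left-continuity, holding $\varepsilon\cdot(1/\widehat{L}^{\theta_{\varepsilon}^{1}})$, i.e.\ the strategy $\varepsilon\widehat{L}^{-1}\lambda1_{[0,\theta_{\varepsilon}^{1}]}$ from initial fortune $\varepsilon$, produces wealth $\varepsilon/\widehat{L}_{\theta_{\varepsilon}^{1}\wedge t}\ge\varepsilon$. At $\theta_{\varepsilon}^{1}$ this wealth equals $1$ whenever $\widehat{L}$ reaches $\varepsilon$ continuously. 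After $\theta_{\varepsilon}^{1}$ we simply hold cash, so the terminal value $1\ge\xi$ covers the claim on $\{\theta_{\varepsilon}^{1}<\zeta\}$; this is where $\xi<1$ enters. Summing the two strategies gives an admissible superreplication with initial fortune $N_{0}^{\varepsilon}+\varepsilon$.

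To conclude, let $\varepsilon\downarrow0$. Since $\theta_{\varepsilon}^{1}\uparrow\rho^{1}$ and $\widehat{L}_{\zeta}=0$ on $\{\rho^{1}<\zeta\}$, we get $N_{0}^{\varepsilon}\to E[\xi\widehat{L}_{\zeta}1_{\{\zeta\le\rho^{1}\}}]$ by dominated convergence, using that $\xi\widehat{L}^{\rho^{1}}$ is dominated by a local martingale bounded on each stopped piece by $1$. The main obstacle is the jump case $\widehat{K}_{\rho^{1}}<\infty$, where $\widehat{L}$ can fall from above $\varepsilon$ directly to $0$ and never hits $\varepsilon$ continuously. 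Then $\theta_{\varepsilon}^{1}=\rho^{1}$ and the leftover wealth $\varepsilon/\widehat{L}_{\rho^{1}}$ may be less than $1$. To handle it, I would replace $\theta_{\varepsilon}^{1}$ by $\theta_{\varepsilon}^{1}\wedge\inf\{t:\widehat{K}_{t}\ge k\}$ and let $k\to\infty$ after $\varepsilon\downarrow0$. This uses that $\widehat{K}$ is continuous and nondecreasing before $\rho^{1}$, and that whenever $\widehat{K}_{t}\uparrow\infty$ the value of $\widehat{L}$ is driven to $0$ continuously; once $\widehat{L}$ is driven to $0$ continuously, the level $\varepsilon$ is hit continuously and the cap-$1$ argument above applies.
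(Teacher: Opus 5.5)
\textbf{Overall.} Your route is the paper's route: the same split of $\xi$ along $\{\zeta\le\theta^{1}_{\varepsilon}\}$ and $\{\theta^{1}_{\varepsilon}<\zeta\}$, the same strategy $\varepsilon\widehat L^{-1}\lambda1_{[0,\theta^{1}_{\varepsilon}]}$ on the second event, and Assumption \ref{ass:prp} with the computation of Lemma \ref{lem:tradable} on the first. The lower bound comes from Proposition \ref{prop:minimality}, as in the paper. Conditioning on $\{\zeta\le\theta^{1}_{\varepsilon}\}$ instead of the paper's $\{\zeta\le\rho^{1}\}$ is harmless. Since $\theta^{1}_{\varepsilon}\le\rho^{1}$, it even gives $N^{\varepsilon}_{0}\le E[\xi\widehat L_{\zeta}1_{\{\zeta\le\rho^{1}\}}]$ at once, so your limit as $\varepsilon\downarrow0$ is unnecessary. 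One small slip: $\varepsilon/\widehat L_{t}\ge\varepsilon$ is false, because $\widehat L$ can exceed $1$. Positivity is all that admissibility needs, so this does no damage.

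\textbf{The gap: your treatment of the jump case fails.} On $\{\rho^{1}<\infty,\ \widehat K_{\rho^{1}}<\infty\}$, the process $\widehat K$ is bounded by $\widehat K_{\rho^{1}}$ on $[0,\rho^{1}]$ and is infinite immediately afterwards. Hence $\inf\{t:\widehat K_{t}\ge k\}=\rho^{1}$ for every $k>\widehat K_{\rho^{1}}$, and your extra stopping changes nothing. $\widehat K$ never increases continuously to $\infty$ before $\rho^{1}$ on this event. The terminal wealth is still $\varepsilon/\widehat L_{\rho^{1}}$, which is $<1$ whenever $\widehat L_{\rho^{1}}>\varepsilon$. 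No device of this kind can work. This event, intersected with $\{\rho^{1}<\zeta\}$, is exactly what produces the extra term $E[\xi_{\rho^{+}}\widehat L_{\rho^{+}}1_{\{\rho^{+}=\rho<\zeta\}}]$ in Theorem \ref{thm:main}. If it had positive probability, Proposition \ref{prop:minimality} would in general force the price above the displayed formula.

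\textbf{The missing observation.} This case is ruled out by the same hypothesis you already used to discard that second term. If $\rho^{1}<\infty$ and $\widehat K_{\rho^{1}}<\infty$, then $\widehat K_{\rho^{1}}^{\rho^{1}+\delta}=\infty$ for every $\delta>0$. Hence $\rho^{+}\le\rho^{1}$, and together with $\rho^{1}\le\rho^{+}$ this gives $\rho^{+}=\rho^{1}$. Now work on $\{\theta^{1}_{\varepsilon}<\zeta\}$ and recall that $\theta^{1}_{\varepsilon}\le\rho^{1}$ always.
\begin{itemize}
\item If $\theta^{1}_{\varepsilon}<\rho^{1}$, continuity of $\widehat L$ on $[0,\rho^{1})$ gives $\widehat L_{\theta^{1}_{\varepsilon}}=\varepsilon$.
\item If $\theta^{1}_{\varepsilon}=\rho^{1}$, left-continuity gives $\widehat L_{\rho^{1}}\ge\varepsilon>0$. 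This means $\widehat K_{\rho^{1}}<\infty$, so $\rho^{+}=\rho^{1}<\zeta$, which contradicts $\rho^{+}\ge\zeta$.
\end{itemize}
Therefore $\varepsilon/\widehat L_{\theta^{1}_{\varepsilon}\wedge\zeta}=1$ almost surely on $\{\theta^{1}_{\varepsilon}<\zeta\}$. This is precisely the inequality the paper asserts; the paper also records, in the proof of Theorem \ref{thm:main}, that $\theta^{1}_{\varepsilon}=\rho^{1}<\infty$ forces $\rho^{1}=\rho^{+}$. With this argument in place of your final paragraph, your proof is complete.
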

\begin{proof}
Recalling Lemma \ref{lem:L},
we obtain $\varepsilon /\widehat{L}\lambda 1_{[0,\theta_{\varepsilon}^{1}]} \in \mathcal{A}_{\varepsilon}$
and 
\begin{equation}
\varepsilon + \int_{0}^{\zeta} \frac{\varepsilon}{\widehat{L}_{u}} \lambda_{u} 1_{\{u \le \theta_{\varepsilon}^{1}\}}dS_{u}
= \frac{\varepsilon}{\widehat{L}_{\theta_{\varepsilon}^{1} \wedge \zeta}}
\ge 1_{\{\theta_{\varepsilon}^{1} < \zeta\}}
\ge \xi 1_{\{\theta_{\varepsilon}^{1} < \zeta\}}.
\end{equation}
Let $N_{t}^{1} := E [\xi \widehat{L}_{\zeta} 1_{\{\zeta \le \rho^{1} \}} |\mathcal{F}_{t}]$ for $t \ge 0$
be a nonnegative martingale.
Then, according to Assumption \ref{ass:prp},
a predictable integrand $f \in L(M)$ exists such that
$N_{t}^{1}=N_{0}^{1}+(f\cdot M)_{t}$ for $t \ge 0$.
As in the proof of Lemma \ref{lem:tradable}, a predictable integrand $h \in L(S)$
exists such that $(N^{1}/\widehat{L})^{\theta_{\varepsilon}^{1}\wedge \zeta} = N_{0}^{1} + (h \cdot S)^{\theta_{\varepsilon}^{1}\wedge \zeta}$.
Then, we obtain
\begin{equation}
N_{0}^{1} + \int_{0}^{\zeta} h_{u} 1_{\{u \le \theta_{\varepsilon}^{1} \}}dS_{u}
\ge \xi 1_{\{\zeta \le \theta_{\varepsilon}^{1}\}}
\end{equation}
because of $\theta_{\varepsilon}^{1} \le \rho^{1}$.
The combined admissible strategy 
$(N_{0}^{1}+\varepsilon,(h+\varepsilon /\widehat{L}\lambda)1_{[0,\theta_{\varepsilon}^{1}\wedge \zeta]})$
superreplicates the payoff $\xi$.
The minimality follows from Proposition \ref{prop:minimality}.
\end{proof}

If the nonnegative local martingale $\widehat{L}$ is a uniformly integrable martingale,
a probability measure $Q$ can be defined as $Q = \widehat{L}_{\zeta} P$ on $\mathcal{F}_{\zeta}$,
under which the stock price process $S$ is a local martingale.
Then, the price is expressed with
$V_{t}= E^{Q} [\xi 1_{\{\zeta \le \rho_{t}^{1} \}} \; | \; \mathcal{F}_{t} ]$,
where $E^{Q}$ is an expectation operator with respect to $Q$.
In case of $Q[\zeta \le \rho^{1}]=1$,
the arbitrage opportunity of the first kind does not contribute to
reduction of the initial fortune to replicate the contingent claim.

We provide the following two examples, for which the NA1 condition is violated
(see Criens and Urusov \cite[Theorem 3.10]{criens2024} for the criteria,
which is shortly reviewed in Section \ref{sec:bankruptcy}).
\begin{example}
Let $S$ be a Bessel process stopped at $S_{t}=0$ with index $\nu < 0$; that is,
$S$ satisfies
\begin{equation}
S_{t} = S_{0} + \beta_{t} +\left(\nu+\frac{1}{2}\right)\int_{0}^{t} \frac{du}{S_{u}},
\end{equation}
until $S$ hits $0$,
where $\beta$ is a Brownian motion with $\beta_{0}=0$.
Then, an arbitrage of the first kind occurs at $\rho^{1} = \inf \{t : S_{t}=0 \} < \infty$.
The price at time $t$ of the contingent claim that pays $g(S_{T})$ at time $T$ is
\begin{align}
V_{t} &= E\left[g(S_{T})\widehat{L}_{t,T}1_{\{T < \rho_{t}^{1} \}} \; \middle | \; \mathcal{F}_{t}\right]\nonumber\\
&= P_{S_{t}}^{(\nu)}\left[g(X_{T-t})\left(\frac{X_{0}}{X_{T-t}}\right)^{\nu+1/2} 
\exp\left(\frac{1}{2}\left(\nu^{2}-\frac{1}{4}\right)\int_{0}^{T-t}\frac{du}{X_{u}^{2}}\right)\right]\nonumber\\
&= W_{S_{t}}\left[g(X_{T-t}) : T-t <\rho_{0}\right],
\end{align}
conditionally on $\{t < \rho^{1} \}$,
where we have used the absolute continuity relationship \eqref{eq:bessel_ac}.
\end{example}

\begin{example}
\label{ex:constdrift}
Let the stock price process $S$ satisfy
\begin{equation}
S_{t} = S_{0} + \int_{0}^{t} S_{u} d\beta_{u} -t,
\end{equation}
where we assume that $0$ is an absorption boundary.
Then, an arbitrage of the first kind occurs at $\rho^{1} = \inf \{t : S_{t}=0 \} < \infty$.

Let $\widehat{L}_{t}= \exp\left(\int_{0}^{t} d\beta_{u}/S_{u}-\frac{1}{2}\int_{0}^{t}du/S_{u}^{2}\right)1_{\{t \le \rho^{1}\}}$
and $\tau_{n}=\inf \{t : S_{t} < 1/n\}$ for $n \in \mathbb{N}$.
Then, the stopped process $\widehat{L}^{\tau_{n}}$ is a $P$-martingale.
The probability measure $Q$ can be defined by 
$Q = \widehat{L}_{T}^{\tau_{n}}P$ on $\mathcal{F}_{T \wedge \tau_{n}}$,
under which $S^{\tau_{n}}$ is a stopped geometric Brownian motion.
Furthermore, $Q$ can be extended on $\mathcal{F}_{T}$
because of
\begin{equation}
E \left[\widehat{L}_{T}\right]
= \lim_{n \rightarrow \infty} E \left[\widehat{L}_{T} : T < \tau_{n}\right]
= \lim_{n \rightarrow \infty} E^{Q}\left[\inf_{t \in [0,T]} S_{t} > 1/n \right]
= 1.
\end{equation}
Then, the price at time $t$ of the contingent claim that pays $g(S_{T})$ at
time $T$ is that of the Black--Scholes model:
\begin{equation}
V_{t} = E\left[g(S_{T})\widehat{L}_{t,T} \; \middle | \; \mathcal{F}_{t}\right]
= W_{\log S_{t}} \left[g\left(\mathrm{e}^{X_{T-t}-(T-t)^{2}/2}\right)\right],
\end{equation}
conditionally on $\{t < \rho^{1} \}$.
\end{example}

\subsection{Proof of Theorem \ref{thm:main}}
\label{sec:proof}
We assume without loss of generality that $\xi < 1$.
Once we demonstrate a superreplicating strategy with the initial fortune 
less than $3\varepsilon + \xi_{0}$ for arbitrary $\varepsilon > 0$,
Theorem \ref{thm:main} follows from Proposition \ref{prop:minimality}.

Let $\theta := \rho^{*} \wedge \rho^{+} \wedge \theta_{\varepsilon}^{1}$.
We remark that $\theta \le \rho$ always holds
and 
\begin{align}
& \theta_{\varepsilon}^{1}(\omega) = \rho^{1}(\omega) < \infty \iff \theta_{\varepsilon}^{1}(\omega) = \rho^{1}(\omega) = \rho^{+}(\omega) < \infty,\\
& \theta(\omega) = \rho(\omega) < \infty
\implies \rho(\omega) = \rho^{*}(\omega) < \infty
\text{ or } \rho(\omega) = \rho^{+}(\omega) < \infty.
\end{align}
We decompose the sample space $\Omega$ into 
$\{\theta < \rho \wedge \zeta\}$
and 
\begin{align}
\{\rho \wedge \zeta \le \theta\}
&= \{\zeta \le \theta \wedge \rho \} \cup \{\theta = \rho < \zeta\} \\
&= \{\zeta \le \theta \} 
\cup \left(\{\rho = \rho^{+} < \zeta\} \cap \{\rho \le \theta_{\varepsilon}^{1}\}\right)
\cup \left(\{\rho^{+} \neq \rho = \rho^{*} < \zeta\} \cap \{\rho \le \theta_{\varepsilon}^{1}\}\right)
\end{align}
and $\xi$ into $\xi = \xi^{\prime} + \xi_{+} + \xi_{*} + \xi_{1}$,
where
\begin{equation}
\xi^{\prime} = \xi 1_{\{\zeta \le \theta \}}, \;
\xi_{+} = \xi 1_{\{\rho = \rho^{+} < \zeta\} \cap \{\rho \le \theta_{\varepsilon}^{1}\}}, \;
\xi_{*} = \xi 1_{ \{\rho^{+} \neq \rho = \rho^{*} < \zeta\} \cap \{\rho \le \theta_{\varepsilon}^{1}\}}, \;
\xi_{1} = \xi 1_{\{\theta < \rho \wedge \zeta\}}.
\end{equation}
In the strategies explained below,
the first two components are superreplicated with the initial fortune less than $\xi_{0} + \varepsilon$
and the rest is with $2\varepsilon$.
\begin{description}
\item[1)] 
To superreplicate the payoff $\xi_{1}$,
we use the strategy $(\varepsilon,\varepsilon /\widehat{L} \lambda 1_{[0,\theta]}) \in \mathbb{R}_{+} \times \mathcal{A}_{\varepsilon}$:
\begin{equation}
\varepsilon + \int_{0}^{\zeta} \frac{\varepsilon}{\widehat{L}_{u}} \lambda_{u} 1_{\{u\le \theta\}}dS_{u}
= \frac{\varepsilon}{\widehat{L}_{\theta  \wedge \zeta}}
= \frac{\varepsilon}{\widehat{L}_{\zeta}} 1_{\{\zeta \le \theta\}}
+ \frac{\varepsilon}{\widehat{L}_{\theta}} 1_{\{\theta = \rho < \zeta\}}
+ 1_{\{\theta < \rho \wedge \zeta\}}
\ge \xi_{1}.
\end{equation}
\item[2)]  
To superreplicate the payoff $\xi_{*}$,
we consider a superreplicating strategy for the contingent claim 
that pays $1_{A_{*}}$ with $A_{*}:=\{\rho^{+} \neq \rho = \rho^{*} < \zeta\} \cap \{\rho \le \theta_{\varepsilon}^{1}\}$
at time $\zeta^{\prime} := \zeta \wedge \rho^{+} \wedge \theta_{\varepsilon/2}^{1}$.
We remark that
$A_{*}=\{\rho^{+} \neq \theta = \rho^{*} < \zeta\} \cap \{\theta = \rho\} \in \mathcal{F}_{\zeta \wedge \theta} \subseteq \mathcal{F}_{\zeta^{\prime}}$.
Then, we obtain $\rho^{*} < \zeta^{\prime} = \zeta \wedge \rho^{+}$
on $\{\theta_{\varepsilon}^{1}=\theta_{\varepsilon/2}^{1}\} \cap A_{*}$
and $\rho^{*} < \zeta^{\prime} = \zeta \wedge \theta_{\varepsilon/2}^{1} $ on $\{\theta_{\varepsilon}^{1}\neq \theta_{\varepsilon/2}^{1}\} \cap A_{*}$;
that is, $P[A_{*} \cap \{\zeta^{\prime} \le \rho^{*}\}]=0$.
As in the proof of Proposition \ref{prop:increasingprofit},
$h^{n} \in L(S)$ exists such that
\begin{equation}
\xi_{0}^{n}+ \int_{0}^{\zeta^{\prime}} h_{u}^{n} dS_{u}
=1_{A_{*}},
\end{equation}
where $\xi_{0}^{n} :=E[1_{A_{*}} L_{\zeta^{\prime}}^{n} | \mathcal{F}_{0}]$
and $(L_{t}^{n})_{t \ge 0}$ is defined as in Section \ref{sec:ip}.
The conditional expectation $\xi_{0}^{n}$ decrease to $E[1_{A_{*}} L_{\zeta^{\prime}}1_{\{\zeta^{\prime} \le \rho^{*}\}} | \mathcal{F}_{0}] = 0$ as $n \rightarrow \infty$, $P$-a.s.
Let $\Omega_{n}:=\{\omega : \inf\{k : N_{0}^{k}(\omega) < \varepsilon\}=n\} \in \mathcal{F}_{0}$.
Then, 
the strategy $(\varepsilon, \sum_{n=1}^{\infty}h^{n}1_{\Omega_{n}}1_{[0,\zeta^{\prime}]}) \in \mathbb{R}_{+} \times \mathcal{A}_{\varepsilon}$
superreplicates the payoff $\xi_{*}$.
\item[3)]  
To superreplicate the payoff $\xi_{+}$,
we consider a superreplicating strategy for the contingent claim 
that pays $\xi_{\rho^{+}}^{\varepsilon}:=(\varepsilon+\xi_{\rho^{+}})1_{\{\rho = \rho^{+} < \zeta\} \cap \{\rho \le \theta_{\varepsilon}^{1}\}}$ at time $\rho^{+} \wedge \zeta$.
We remark that $\xi_{\rho^{+}}^{\varepsilon}$ is $\mathcal{F}_{\rho^{+} \wedge \zeta}$-measurable.
Let 
\begin{equation}
N_{t} := E[\xi_{\rho^{+}}^{\varepsilon}\widehat{L}_{\rho^{+}} | \mathcal{F}_{t}]
\end{equation}
and $V_{t}:=N_{t}/\widehat{L}_{t}^{\theta_{\varepsilon}^{1}}$ for $t \ge 0$.
Then, $h \in L(S)$ exists such that $V_{t}^{\theta_{\varepsilon}^{1}} = N_{0} + (h\cdot S)_{t}^{\theta_{\varepsilon}^{1}}$.
We obtain
\begin{equation}
N_{0} + \int_{0}^{\zeta \wedge \rho^{+}} h_{u}1_{[0,\theta_{\varepsilon}^{1}]}(u)dS_{u}
= V_{\zeta \wedge \rho^{+} \wedge \theta_{\varepsilon}^{1}}
\ge \xi_{\rho^{+}}^{\varepsilon}1_{\{\zeta \wedge \rho^{+} \le \theta_{\varepsilon}^{1}\}}
= \xi_{\rho^{+}}^{\varepsilon}.
\end{equation}
The payoff $\xi_{\rho^{+}}^{\varepsilon}$ at time $\rho^{+} \wedge \zeta$
is superreplicated with the initial fortune $N_{0}$.
On the event $\{\rho^{+} < \zeta\}$,
we use $\xi_{\rho^{+}}^{\varepsilon}$ at time $\rho^{+}$ as the initial fortune 
and an appropriate strategy to superreplicate the payoff $\xi_{+}$ at time $\zeta$.
Finally, the initial fortune $N_{0}$ is less than the second term of \eqref{eq:main} plus $\varepsilon$:
\begin{equation}
N_{0} 
\le \varepsilon + E[\xi_{\rho^{+}}\widehat{L}_{\rho^{+}}^{\theta_{\varepsilon}^{1}} 1_{\{\rho = \rho^{+} < \zeta\} \cap \{\rho \le \theta_{\varepsilon}^{1}\}}| \mathcal{F}_{0}]
\le \varepsilon + E[\xi_{\rho^{+}}\widehat{L}_{\rho^{+}} 1_{\{\rho = \rho^{+} < \zeta\}}| \mathcal{F}_{0}].
\end{equation}
\item[4)]  
To superreplicate the payoff $\xi^{\prime}$,
we consider a superreplicating strategy for the contingent claim 
that pays $\xi^{\prime}$ at time $\zeta$.
Let 
\begin{equation}
N_{t} := E\left[\xi^{\prime}\widehat{L}_{\zeta} \middle | \; \mathcal{F}_{t}\right]
\end{equation}
and $V:=N/\widehat{L}^{\theta_{\varepsilon}^{1}}$.
Then, $h \in L(S)$ exists such that
such that $V_{t}^{\theta_{\varepsilon}^{1}} = N_{0} + (h\cdot S)_{t}^{\theta_{\varepsilon}^{1}}$.
Then, we obtain
\begin{equation}
N_{0} + \int_{0}^{\zeta} h_{u}1_{[0, \theta_{\varepsilon}^{1}]}(u)dS_{u}
= V_{\zeta \wedge \theta_{\varepsilon}^{1}}
\ge \xi^{\prime} 1_{\{\zeta \le \theta_{\varepsilon}^{1}\}}
= \xi^{\prime}.
\end{equation}
Finally, the initial fortune $N_{0}$ is less than the first term of \eqref{eq:main}:
\begin{equation}
N_{0}= E\left[\xi^{\prime}\widehat{L}_{\zeta} \middle | \; \mathcal{F}_{0}\right]
= E\left[\xi \widehat{L}_{\zeta}1_{\{\zeta \le \theta\}} \middle | \; \mathcal{F}_{0}\right]
\le E\left[\xi \widehat{L}_{\zeta}1_{\{\zeta \le \rho\}} \middle | \; \mathcal{F}_{0}\right].
\end{equation}
\end{description}

\section{Modeling stock price with funding}
\label{sec:funding}
In this section, we model the stock price process \eqref{eq:underlying} with $d=1$:
\begin{equation}
S_{t} = S_{0} + M_{t} + \int_{0}^{t} \lambda_{u} d\left<M,M\right>_{u} + \kappa_{t}, \label{eq:underlying1d}
\end{equation}
incorporating effects of funding.
We regard the fundraiser and arbitrager as large traders
in the sense that they affect the stock price through their funding
and arbitrage strategies respectively,
and the rest of them as small traders in the sense that they do not.

In the literature, the singular term $\kappa$ is used to express the effect caused by artificial price support by a large trader.
For example, Jarrow and Protter \cite{Jarrow2005} began with a semimartingale without a singular term
and supposed that a level exists at which a large trader buys or sells.
The price impact is modeled in the discrete time
and then converges to the local time of the semimartingale at that level.
Finally, they added this singular term to the original semimartingale (see Jarrow and Protter \cite[Section 4]{Jarrow2005} for more detail).

We also regard a singular term as a result caused by large traders.
Our ingredient is to introduce the number of the stocks 
and a constraint so that the market participants play a zero-sum game.
If no arbitrager exists, the fundraiser can control the singular term through the number of the stocks, which is demonstrated in Section \ref{sec:in the absence of arbitragers}.
The existence of an arbitrager makes the situation so complicated
that we discuss the optimality of arbitrage strategies under simplified assumptions through examples in Section \ref{sec:in the presence of arbitragers}.

The presence of arbitragers is not desirable to the fundraiser from the perspective of the market capitalization in this market model.
We propose funding strategies that cause no increasing profit to the market participants except the fundraiser in Section \ref{sec:fs_and_process}.
More precisely, increasing profits exist for the fundraiser,
but it cannot be observed by agents whose information is generated by the stock price only.
Then, we investigate whether the funding strategies cause other kinds of arbitrage 
and 
a stock price to be $0$, which we interpret as bankruptcy,
in Section \ref{sec:bankruptcy}.

\subsection{Modeling stock price under a zero-sum constraint}
\label{sec:zero-sum}
Let $n=(n_{t})_{t\ge 0}$ be a nonnegative continuous predictable process of finite variation, 
which represents the number of stocks at time $t \ge 0$.
At time $t$ with $dn_{t} > 0$, the fundraiser (``he'') issues $dn_{t}$ units of new stocks and raises $S_{t}dn_{t}$.
On the contrary, at time $t$ with $dn_{t} < 0$, he repurchases $-dn_{t}$ units of stocks and spends $-S_{t}dn_{t}$.
As a result, the raised money by him up to $t$ is $F_{t} := \int_{0}^{t} S_{u} dn_{u}$.

Corporate stock issuance and repurchase plans affect its stock price.
We interpret $d\kappa_{t}$ as the impact caused by these funding activities.
More precisely, we assume that $d\kappa_{t}$ is supported on $\{t : dn_{t} \neq 0\}$.
The aggregate impact on the market by these activities is $H_{t} := \int_{0}^{t} n_{u} d\kappa_{u}$.

\subsubsection{In the absence of arbitragers}
\label{sec:in the absence of arbitragers}
Under the assumption that no arbitrager exists,
we introduce the following constraint on $F$ and $H$,
which means that
the fundraiser and rest of the market participants play a zero-sum game at each time $t$ with $dn_{t} \neq 0$:
\begin{assumption}
\label{ass:constraint1}
The processes $F$ and $H$ satisfy $F_{t} + H_{t} = 0$ for all $t \ge 0$.
\end{assumption}

Under this constraint, the singular term $\kappa$ is determined by
$d\kappa_{t} = -S_{t} \frac{dn_{t}}{n_{t}}$, or equivalently, by
\begin{equation}
n_{t} = n_{0} \exp \left(-\int_{0}^{t}\frac{d\kappa_{u}}{S_{u}}\right).
\end{equation}
In other words, the fundraiser can control the singular term $\kappa$
as he likes through issuing or repurchasing the stocks.
As a result, the market capitalization $V_{t} := n_{t}S_{t}$ is a semimartingale without singular terms:
\begin{equation}
V_{t} = V_{0} + \int_{0}^{t} n_{u}dS_{u} + \int_{0}^{t} S_{u} dn_{u}
= V_{0} + \int_{0}^{t} n_{u}d(S_{u}-\kappa_{u}).\label{eq:capital}
\end{equation}

\begin{example}
\label{ex:emery-perkins2}
Let the stock price $S$ be defined by \eqref{eq:emery-perkins} in Example \ref{ex:emery-perkins}; that is,
\begin{equation}
S_{t} = S_{0} + |\beta_{t}|-|\beta_{0}| + (\alpha-1) L_{t}^{0}(\beta)
= S_{0} + \int_{0}^{t} \mathrm{sign}(\beta_{u})d\beta_{u}+\alpha L_{t}^{0}(\beta),\label{eq:emery-perkins2}
\end{equation}
where $\alpha \ge 1$ for simplicity,
$\beta$ is a Brownian motion
and $L^{0}(\beta)$ is its local time at the level $0$.

In our interpretation, the fundraiser repurchases the stocks 
at the price $S_{t}=b + (\alpha-1) L_{t}^{0}(\beta)$ with $b = S_{0}-|\beta_{0}|$.
The number $n$ of the stocks decreases by $-dn_{t}=\alpha n_{t}dL_{t}/S_{t}$ at time $t$.
Then, we obtain
\begin{equation}
n_{t} = n_{0} \left(1+\frac{c}{b} L_{t} \right)^{-\left(1+\frac{1}{c}\right)},
\end{equation}
where $c=\alpha-1$ and $(1+ax)^{1/x} :=\mathrm{e}^{a}$ for $x=0$.
The money that he spends to repurchase the stocks up to the time $t$ is
\begin{equation}
-F_{t} = (1+c)n_{0}
\left(1-\left(1+\frac{c}{b}L_{t}\right)^{-\frac{1}{c}}\right) b.
\end{equation}
\end{example}

\begin{example}
\label{ex:doubly-reflected-BM}
Let the stock price $S$ be a doubly reflected Brownian motion
$S = \varphi(B)$ for a Brownian motion $B$,
where the function $\varphi : \mathbb{R} \longrightarrow [K_{1},K_{2}]$ for $K_{1}, K_{2} \ge 0$ with $\Delta := K_{2}- K_{1}>0$
satisfies 
$\varphi(2n \Delta)=K_{1}$, $\varphi((2n+1) \Delta)=K_{2}$ for
$n=0,\pm 1, \pm 2, \dots$ and is linear between these points.
Then, according to the generalized Ito formula,
$S$ satisfies
\begin{equation}
S_{t} = S_{0} + \int_{0}^{t} D_{-}\varphi(B_{u}) dB_{u} + \frac{1}{2}\int_{\mathbb{R}} L_{t}^{x}\varphi^{\prime \prime}(dx),
\end{equation}
where $D_{-}$ is the left-hand derivative operator,
$L_{\cdot}^{x}$ is the local time process of $B$ at $x$,
and $\varphi^{\prime \prime}$ is the signed measure associated with $\varphi$;
that is, $\varphi^{\prime \prime}=2(\sum_{k:\text{even}} \delta_{k \Delta}-\sum_{k:\text{odd}} \delta_{k \Delta})$ with $\delta_{x}$ the Dirac-delta measure at $x$.

This stock price is interpreted as the result
for the fundraiser to repurchase the stocks at time $t$ with $S_{t} = K_{1}$
and to issue the stocks at time $t$ with $S_{t} = K_{2}$
so that the number $n$ of the stock is
\begin{equation}
n_{t} = n_{0} \exp \left(\frac{1}{K_{2}}\sum_{k:\text{odd}} L_{t}^{k\Delta}-\frac{1}{K_{1}}\sum_{k:\text{even}} L_{t}^{k\Delta} \right).
\end{equation}
\end{example}

\subsubsection{In the presence of arbitragers}
\label{sec:in the presence of arbitragers}
In addition to the fundraiser,
we consider the stock price process in the presence of an arbitrager (``she'').
Let her trading strategy be denoted by $g \in L(S)$,
which is assumed to satisfy $g = 0$ on $\{t : d\kappa_{t} = 0\}$,
$g_{t}d\kappa_{t} \ge 0$,
and $g_{t} d\left<M,M\right>_{t}=0$.
Then, her gain process $G_{t} := \int_{0}^{t} g_{u} dS_{u} = \int_{0}^{t} g_{u} d\kappa_{u}$ is nondecreasing.

Instead of Assumption \ref{ass:constraint1},
we assume the following constraint on $F,G$, and $H$ in the presence of an arbitrager:
\begin{assumption}
The processes $F,G$, and $H$ satisfy $F_{t} + G_{t} + H_{t} = 0$ for all $t \ge 0$.
\end{assumption}

Under this constraint, the singular term $\kappa$ satisfies
\begin{equation}
d\kappa_{t} = -\frac{S_{t}}{n_{t}+g_{t}} dn_{t} = -\frac{S_{t}}{1+q_{t}} \frac{dn_{t}}{n_{t}} \label{eq:constraint-with-arb}
\end{equation}
with $q_{t}:=g_{t}/n_{t} \in (-1,\infty)$,
where $q_{t} > 1$ is interpreted to mean that the rest of market participants sell the stock short to her.
Then, the market capitalization $V_{t} = n_{t}S_{t}$ is 
\begin{equation}
V_{t} = V_{0} + \int_{0}^{t} n_{u}d(S_{u}-\kappa_{u}) -G_{t}. \label{eq:capital_witharb}
\end{equation}

The situation is more complicated than that in the previous subsection
in the sense that the funding and arbitrage strategies simultaneously affect the stock price.
Rather than finding an economic equilibrium, we consider a simplified problem.
For a given stock price process $S$ as \eqref{eq:underlying1d},
suppose that the arbitrager takes her strategies
such that $q$ is constant on $\{t : d\kappa_{t} > 0\}$
and on $\{t : d\kappa_{t} < 0\}$, respectively.
Suppose furthermore that the fundraiser takes his funding strategies
such that the number $n$ of the stocks is determined by \eqref{eq:constraint-with-arb}.
Then, the problem is whether her strategies are optimal to her or not in a certain sense.
We consider this problem using Examples \ref{ex:emery-perkins2} and \ref{ex:doubly-reflected-BM}.

\begin{example}
We consider the problem of Example \ref{ex:emery-perkins2} in the presence of an arbitrager with $q_{t} = q \in [0,\infty)$ and $\alpha > 1$ for simplicity.
The stock price follows \eqref{eq:emery-perkins2},
if the fundraiser repurchases the stocks such that the number of the stocks is
\begin{equation}
n_{t}^{(q)} = n_{0} \left(1+\frac{c}{b} L_{t} \right)^{-p}
\end{equation}
with $p = (1+q)(1+1/c) \ge 1$.
In this case, the money that he spends to repurchase the stocks up to the time $t$ is
\begin{equation}
-F_{t}^{(q)} = \frac{p}{p-1} n_{0}
\left(1-\left(1+\frac{c}{b}L_{t}\right)^{1-p}\right) b.
\end{equation}
Her gain up to the time $t$ is 
\begin{equation}
G_{t}^{(q)} = -\frac{p}{p+1}F_{t}
= \frac{p^{2}}{p^{2}-1} n_{0}
\left(1-\left(1+\frac{c}{b}L_{t}\right)^{1-p}\right) b.
\end{equation}

We consider the asymptotic case: $q \longrightarrow \infty$.
Let $\tau := \inf \{t: S_{t} = b \}$,
which is the first time for him to repurchase the stock.
Then, the number $n_{t}^{(q)}$ of the stocks is getting to $n_{0}1_{[0,\tau]}(t)$,
and her gain $G_{t}^{(q)}$ is to $n_{0}b1_{(\tau,\infty)}(t)$,
which means that the corporate exits the stock market at $\tau$,
and her gain at that time is the market capitalization $V_{\tau} = n_{0}b$.
\end{example}

\begin{example}
We consider the problem of Example \ref{ex:doubly-reflected-BM}
in the presence of an arbitrager
with $q_{t} = q_{1} > 0$ at the time $t$ with $S_{t}=K_{1}$,
$q_{t} = -q_{2} \in (-1,0)$ at the time $t$ with $S_{t}=K_{2}$ for some
constant $q_{1},q_{2}$.
The fundraiser has to issue and repurchase the stocks such that the number of the stocks is
\begin{equation}
n_{t}^{(q_{1},q_{2})} = n_{0} \exp \left(\frac{1-q_{2}}{K_{2}}\sum_{k:\text{odd}} L_{t}^{k\Delta}-\frac{1+q_{1}}{K_{1}}\sum_{k:\text{even}} L_{t}^{k\Delta} \right).
\end{equation}
As a result, the gain process $G^{(q_{1},q_{2})}$ of the arbitrager is 
\begin{align}
G_{t}^{(q_{1},q_{2})} &= \sum_{k} \int_{0}^{t} |q_{u}| n_{u}^{(q_{1},q_{2})} dL_{u}^{k\Delta} \\
&= n_{t}^{(q_{1},q_{2})} \left(\frac{q_{2}K_{2}}{1-q_{2}} \sum_{k:\text{odd}} \left(1-\mathrm{e}^{-(1-q_{2})L_{t}^{k\Delta}/K_{2}}\right)
+ \frac{q_{1}K_{1}}{1+q_{1}} \sum_{k:\text{even}} \left(\mathrm{e}^{(1+q_{1})L_{t}^{k\Delta}/K_{1}}-1\right)\right).
\end{align}

We consider the two extreme cases $q_{1} \longrightarrow \infty$ and $q_{2} \longrightarrow 1$,
assuming $B_{0} \in (0,\Delta)$ for simplicity.
Let $\tau_{i} := \inf \{t > 0: S_{t} = K_{i} \}$ for $i=1,2$.

We consider the asymptotic case: $q_{1} \longrightarrow \infty$.
As in the previous example,
the corporate exits the stock market at $\tau_{1}$,
at which time she makes profits $V_{\tau_{1}} = n_{\tau_{1}}K_{1}$.
Then, her gain process satisfies $G_{t}^{(\infty,q_{2})} \ge n_{\tau_{1}}K_{1} 1_{\{\tau_{1} < t\}}$.
In contrast to Example \ref{ex:emery-perkins2},
this strategy is not necessarily the optimal one.
Consider the arbitrage strategy
such that 
$q_{t} = 0$ for $t < \tau_{3} := \inf\{t > \tau_{2}:S_{t}=K_{1}\}$
and $q_{\tau_{3}} = \infty$.
Then, the corresponding gain process $G$ satisfies 
$G_{\tau_{3}} \ge n_{\tau_{3}} K_{1}$
with the corresponding number of the stocks $n$.
On the event $\{\tau_{1} < \tau_{2}\} \cap \{n_{0} < n_{\tau_{3}}\}$,
which has a positive probability,
the terminal gains $G_{\infty}^{(\infty,q_{2})}$ and $G_{\infty}$ of the two strategies satisfy
\begin{equation}
G_{\infty}^{(\infty,q_{2})}=G_{\tau_{1}}^{(\infty,q_{2})}=n_{0} K_{1}
< n_{\tau_{3}} K_{1} \le G_{\tau_{3}} = G_{\infty}.
\end{equation}

We consider the asymptotic case: $q_{2} \longrightarrow 1$.
In this case, we obtain 
\begin{align}
n_{t}^{(q_{1},1)} &= n_{0} \exp \left(-\frac{1+q_{1}}{K_{1}}\sum_{k:\text{even}} L_{t}^{k\Delta} \right),\\
G_{t}^{(q_{1},1)} &= n_{t}^{(q_{1},1)}\left(\sum_{k:\text{odd}} L_{t}^{k\Delta}
+ \frac{q_{1}}{1+q_{1}}K_{1} \sum_{k:\text{even}} \left(\mathrm{e}^{(1+q_{1})L_{t}^{k\Delta}/K_{1}}-1\right)\right),\\
F_{t}^{(q_{1},1)} &= n_{t}^{(q_{1},1)} \frac{q_{1}}{1+q_{1}}K_{1} \sum_{k:\text{even}} \left(\mathrm{e}^{(1+q_{1})L_{t}^{k\Delta}/K_{1}}-1\right).
\end{align}
The number $n_{t}^{(q_{1},1)}$ of the stocks does not change at time $t$ with $S_{t} = K_{2}$.
We interpret to mean that 
the fundraiser tries to raise money by issuing $dn_{t}^{(q_{1},1)}$-units of new stocks at time $t$ with $S_{t}=K_{2}$
and the arbitrager instantaneously sells $n_{t}^{(q_{1},1)}$-units of the stock short,
which leads him to refrain from the issuance.
As a result, he cannot raise money at all.
\end{example}

\subsection{Funding strategies without increasing profits}
\label{sec:fs_nip}
The presence of arbitragers is not desirable to the fundraiser
from the perspective of the market capitalization (see \eqref{eq:capital} and \eqref{eq:capital_witharb}).
This observation leads us to propose funding strategies that do not provide
increasing profits.

Suppose that
if the fundraiser did not take any funding strategies, the stock price process $S$ would be a diffusion
that satisfies
\begin{align}
S_{t} = S_{0} + \int_{0}^{t} \sigma(S_{u}) d\beta_{u} + \int_{0}^{t} \mu(S_{u})du, \label{sde:original}
\end{align}
where $0,\infty$ are assumed to be absorption boundaries,
$\beta$ is a Brownian motion with $\beta_{0} = 0$,
and $\sigma,\mu \in C^{1}((0,\infty))$ with $\sigma(y) \neq 0$ in $(0,\infty)$,
and that he makes decisions on funding strategies $n$ based on $\mu$ and $\sigma^{2}$ (the choice of the sign of $\sigma$ is up to him).
Then, his funding strategy $n$ introduces a singular term $\kappa$ into the semimartingale decomposition of \eqref{sde:original}.
This term is completely determined by $n$ in the absence of arbitragers
as seen in Section \ref{sec:zero-sum}.
Suppose that the resultant stock price process $Y$ satisfies
\begin{align}
Y_{t} = Y_{0} + \int_{0}^{t} \sigma(Y_{u}) d\beta_{u} + \int_{0}^{t} \tilde{\mu}(Y_{u})du + \kappa_{t}
\end{align}
for $Y_{0}=S_{0}$ and an appropriate $\tilde{\mu}$,
where $0$ is assumed to be an absorption boundary.
Hereafter, we denote $S$ as the hypothetical stock price process
and $(1,Y)$ as the resultant market model.
If he could choose his funding strategies $n$ such that
$\kappa$ is not adapted to the filtration generated by $Y$,
the resultant market models would not provide increasing profits
to agents who observe only the stock price $Y$.
This is possible under some conditions using Pitman's theorem (see Example \ref{ex:emery-perkins2} with $\alpha=2$ for a typical example).

\subsubsection{Funding strategies and the resultant stock price processes}
\label{sec:fs_and_process}
We propose two types of funding strategies: the repurchase and issuance strategies.
Intuitively, it would be better to repurchase the stock if $S_{t} \longrightarrow 0$,
and to issue if $S_{t} \longrightarrow \infty$.
Then, we consider the problem under the following assumption.
\begin{assumption}
\label{ass:transient}
The stock price process $S$ is not recurrent;
that is, $s(0)\neq -\infty$ or $s(\infty) \neq \infty$,
where $s$ is the scale function of $S$:
\begin{equation}
s(y) := \int_{Y_{0}}^{y} \exp \left(-2 \int_{Y_{0}}^{\eta} \frac{\mu(\xi)}{\sigma(\xi)^{2}} d\xi \right) d\eta.
\end{equation}
\end{assumption}

Let $\psi$ be the Lamperti-type transformation:
\begin{equation}
\psi(y) := \int_{Y_{0}}^{y} \frac{d\eta}{\sigma(\eta)},
\end{equation}
$h := f \circ \psi^{-1}$ and $T_{h}:=h^{\prime\prime}/h^{\prime}$,
where $f$ is a nonconstant affine transformation of $s$, which will be specified later.
We remark that $T_{h}$ is well defined, because it is invariant under an affine transformation.
Let $l$ and $r$ be the endpoints of the interval $\psi((0,\infty))$;
that is, $l := \min\{\psi(0),\psi(\infty)\}$ and $r := \max\{\psi(0),\psi(\infty)\}$.
The Lamperti-type transformation $\psi$ transforms $S$ to a semimartingale with a Brownian motion as a local martingale part:
\begin{equation}
\psi(S_{t}) = \psi(S_{0}) + \beta_{t} - \frac{1}{2} \int_{0}^{t} T_{h}(\psi(S_{u}))du.
\end{equation}
Then, \eqref{sde:original} is expressed with
\begin{align}
S_{t} = S_{0} + \int_{0}^{t} \sigma(S_{u}) d\beta_{u} -\frac{1}{2} \int_{0}^{t}T_{f}(S_{u})\sigma(S_{u})^{2}du.
\end{align}

\paragraph{the repurchase strategy}
We propose funding strategies under the condition $s(0) \neq -\infty$.
We assume that $\sigma > 0$, which makes the notation easier.
We define two increasing functions $f : (0,\infty) \longrightarrow \mathbb{R}$
and $h : (l,r) \longrightarrow \mathbb{R}_{+}$
by $f(y)=s(y)+C$ and  $h=f\circ \psi^{-1}$, respectively,
where $C$ is a constant such that $f(0)=s(0)+C = 0$ if $l = \psi(0) = -\infty$,
and $f(0)=s(0)+C \ge 0$, otherwise.
For a Brownian motion $\beta^{1/h}$ with $\beta_{0}^{1/h} \in (l,r)$ defined on some probability space,
let $X$ be a diffusion taking its values in $[l,r]$ that satisfies
\begin{equation}
X_{t} = \beta_{t}^{1/h} -\frac{1}{2} \int_{0}^{t} T_{1/h}(X_{u}) du. \label{eq:transientX}
\end{equation}
We assume that $X$ is a transient diffusion with $l < \inf_{t\ge0} X_{t}$ and $\sup_{t\ge0} X_{t} = r$,
or that $h$ can be extended to a function on $(l^{*},r)$ with $l^{*} < l$
such that $X$ is extended to be a transient diffusion
that satisfies \eqref{eq:transientX} with $l^{*} < \inf_{t\ge0} X_{t}$ and $\sup_{t\ge0} X_{t} = r$.
In the former case, let $l^{*} := l$.

Let $\mathcal{F}_{t}^{X,J}:=\mathcal{F}_{t}^{X} \vee \sigma(J_{t})$,
where $(\mathcal{F}_{t}^{X})_{t \ge 0}$ is the usual augmentation of the filtration generated by $X$
and $J_{t} := \inf_{u > t} X_{u}$.
Then, according to Aksamit and Jeanblanc \cite[Section 5.7]{aksamit2017enlargement},
a Brownian motion $\beta^{h}$ with respect to the enlarged filtration $(\mathcal{F}_{t}^{X,J})$ exists such that
\begin{equation}
X_{t} = \beta_{t}^{h} -\frac{1}{2} \int_{0}^{t} T_{h}(X_{u}) du + 2J_{t}.
\end{equation}
Let $\zeta := \inf\{t : X_{t} = l\}$ and $Y := \psi^{-1}(X^{\zeta})$.
Then, $Y$ has the following two semimartingale decompositions
with respect to $(\mathcal{F}_{t}^{X})$ and $(\mathcal{F}_{t}^{X,J})$, respectively:
\begin{align}
Y_{t} &= Y_{0} + \int_{0}^{t \wedge \zeta} \sigma(Y_{u}) d\beta_{u}^{1/h} -\frac{1}{2} \int_{0}^{t \wedge \zeta} T_{1/f}(Y_{u})\sigma(Y_{u})^{2}du \label{sde:withfunding_to_ordinary} \\
&= Y_{0} + \int_{0}^{t \wedge \zeta} \sigma(Y_{u}) d\beta_{u}^{h} -\frac{1}{2} \int_{0}^{t \wedge \zeta}T_{f}(Y_{u}) \sigma(Y_{u})^{2}du + \kappa_{t \wedge \zeta}, \label{sde:withfunding_to_fundraiser}
\end{align}
where $\kappa_{t} := 2(\inf_{u>t} Y_{u}-\inf_{u>0} Y_{u})$.

If the fundraiser repurchases the stocks such that
the number of the stocks is
\begin{equation}
n_{t} = n_{0}\exp \left(-\int_{0}^{t \wedge \zeta}\frac{d\kappa_{u}}{Y_{u}}\right)
= n_{0}\left(\frac{\psi^{-1}(J_{0})}{\psi^{-1}(J_{t})}\right)^{2} 1_{\{J_{0} > l\}}
+n_{0}1_{\{J_{0} \le l\}},
\label{eq:fs_nip}
\end{equation}
the stock price process has the same law as $Y$.
The representation \eqref{sde:withfunding_to_ordinary}
is the semimartingale decomposition with respect to the filtration $(\mathcal{F}_{t}^{X})$,
which expresses the information available to the ordinary investors.
In the resultant market model, no increasing profit exists.
On the contrary,
the representation \eqref{sde:withfunding_to_fundraiser}
is the semimartingale decomposition with respect to the filtration $(\mathcal{F}_{t}^{X,J})$,
which expresses the information available to the fundraiser.

\paragraph{the issuance strategy}
We propose funding strategies under the condition $s(\infty) \neq \infty$
and the assumption $\sigma < 0$.
We mention only the differences from the repurchasing strategies.
The function $f$ is defined as a decreasing function by $f(y)=-s(y)+C$,
where $C$ is a constant such that $f(\infty)=-s(\infty)+C = 0$ if $l = \psi(\infty) = -\infty$,
and $f(\infty)=-s(\infty)+C \ge 0$, otherwise.
The function $h$ is defined by $h=f\circ \psi^{-1}$ on $(l,r)$.
We make an assumption about $h$, which is extended if necessary,
so that that $X$ is a transient diffusion that satisfies
\eqref{eq:transientX} with $l^{*} < \inf_{t\ge0} X_{t}$ and $\sup_{t\ge0} X_{t} = r$ for some $l^{*} \le l$.
Then, $Y:= \psi^{-1}(X^{\zeta})$ satisfies \eqref{sde:withfunding_to_ordinary} and \eqref{sde:withfunding_to_fundraiser}
with $\kappa_{t} := 2(\sup_{u>t} Y_{u}-\sup_{u>0} Y_{u})$.
If the fundraiser issues the stocks such that \eqref{eq:fs_nip} is satisfied,
the stock price process has the same law as $Y$.

\vspace{\baselineskip}
Hereafter, we define $\Sigma_{t}^{Y}$
as $\Sigma_{t}^{Y} := 2(\inf_{u>t} Y_{u}-\inf_{u>0} Y_{u})$
for the repurchase strategy
and as $\Sigma_{t}^{Y} := 2(\sup_{u>t} Y_{u}-\sup_{u>0} Y_{u})$
for the issuance strategy.
\begin{example}
\label{ex:gbm}
We consider the case of $\sigma(y)^{2}=\sigma^{2} y^{2},\mu(y)=\mu y$ with $\sigma \neq 0$,
where the hypothetical stock price process \eqref{sde:original} is a geometric Brownian motion; that is,
\begin{equation}
S_{t} = S_{0} + \int_{0}^{t} \sigma S_{u} d\beta_{u} + \int_{0}^{t} \mu S_{u} du
= S_{0} \mathrm{e}^{\sigma \beta_{t} + (\mu-\sigma^{2}/2) t}.
\end{equation}
A straightforward computation leads to 
\begin{equation}
s(y) = \frac{Y_{0}}{\alpha} \left(\left(\frac{y}{Y_{0}}\right)^{\alpha}-1\right),
\end{equation}
for $\alpha := 1-2\mu/\sigma^{2} \neq 0$,
and $s(y) = Y_{0}\log \frac{y}{Y_{0}}$ for $\alpha=0$.

\begin{description}
\item[in case of $\alpha = 0$ :]
The fundraiser can take neither of the abovementioned two funding strategies.
\item[in case of $\alpha > 0$ :]
If the fundraiser repurchased the stocks according to the proposed strategy with
\begin{equation}
\psi(y) = \frac{1}{\sigma} \log \frac{y}{Y_{0}},\; h(x) = \frac{Y_{0}}{\alpha}\mathrm{e}^{\alpha\sigma x},\; f(y)=\frac{Y_{0}}{\alpha}\left(\frac{y}{Y_{0}}\right)^{\alpha},
\end{equation}
where we assume $\sigma > 0$,
the resultant stock price process $Y$ would be
\begin{align}
Y_{t} &= Y_{0} + \int_{0}^{t}\sigma Y_{u}d\beta_{u}^{1/h} + \int_{0}^{t}(\mu+\alpha \sigma^{2}) Y_{u}du \\
&= Y_{0} + \int_{0}^{t}\sigma Y_{u}d\beta_{u}^{h} + \int_{0}^{t} \mu Y_{u}du + \Sigma_{t}^{Y}.
\end{align}
\item[in case of $\alpha < 0$ :]
If the fundraiser issued the stocks according to the proposed strategy with
\begin{equation}
\psi(y) = \frac{1}{\sigma} \log \frac{y}{Y_{0}},\;
h(x) = -\frac{Y_{0}}{\alpha}\mathrm{e}^{\alpha\sigma x},\;
f(y)=-\frac{Y_{0}}{\alpha}\left(\frac{y}{Y_{0}}\right)^{\alpha},
\end{equation}
where we assume $\sigma < 0$,
the resultant stock price process $Y$ would be
\begin{align}
Y_{t} &= Y_{0} + \int_{0}^{t}\sigma Y_{u}d\beta_{u}^{1/h} + \int_{0}^{t} (\mu+\alpha \sigma^{2}) Y_{u}du \nonumber\\
&= Y_{0} + \int_{0}^{t}\sigma Y_{u}d\beta_{u}^{h} + \int_{0}^{t} \mu Y_{u}du + \Sigma_{t}^{Y}.
\end{align}
\end{description}
\end{example}

\begin{example}
\label{ex:repurchasing}
We consider the case of $\sigma(y)=1,\mu(y)=0$,
where the hypothetical stock price process \eqref{sde:original} is a Brownian motion stopped at $0$.
A straightforward computation leads to 
\begin{equation}
s(y) = y-Y_{0},\;
\psi(y) = y-Y_{0},\; h(x) = x + Y_{0} + C,\; f(y)=y+C
\end{equation}
with $C \ge 0$, $l=-Y_{0}$ and $r=\infty$.
Then, if the fundraiser repurchased the stocks according to the proposed strategy,
the stock price process $Y$ would satisfy
\begin{equation}
Y_{t} = Y_{0} + \beta_{t}^{1/h} + \int_{0}^{t} \frac{du}{Y_{u}+C}
= Y_{0} + \beta_{t}^{h}+ \Sigma_{t}^{Y}.
\end{equation}
In particular, $Y$ is a three-dimensional Bessel process,
which violates the NA condition, if $C=0$.
\end{example}

\begin{example}
We consider the case of $\sigma(y)=-y^{2},\mu(y)=y^{3}$,
where the hypothetical stock price process \eqref{sde:original} is
\begin{equation}
S_{t} = S_{0} -\int_{0}^{t}S_{u}^{2}d\beta_{u} + \int_{0}^{t}S_{u}^{3} du.
\end{equation}
We remark that $S$ explodes to $\infty$ in finite time.
A straightforward computation leads to 
\begin{equation}
s(y) = Y_{0}\left(1-\frac{Y_{0}}{y}\right),\;
\psi(y) = \frac{1}{y}-\frac{1}{Y_{0}},\;
h(x) = Y_{0}^{2}x+C,\;
f(y)=\frac{Y_{0}^{2}}{y}+C-Y_{0}
\end{equation}
with $C \ge Y_{0}$, $l=-1/Y_{0}$ and $r=\infty$.
Then, if the fundraiser issued the stocks according to the proposed strategy,
the stock price process $Y$ would follow
\begin{align}
Y_{t} &= Y_{0} -\int_{0}^{t}Y_{u}^{2}d\beta_{u}^{1/h}
 +\int_{0}^{t}Y_{u}^{3}\left(1-\frac{Y_{0}}{(C/Y_{0}-1)Y_{u}+Y_{0}}\right)du,\\
&= Y_{0} -\int_{0}^{t}Y_{u}^{2}d\beta_{u}^{h} + \int_{0}^{t}Y_{u}^{3} du
+ \Sigma_{t}^{Y}.
\end{align}
In particular, $Y$ is the reciprocal of a three-dimensional Bessel process, if $C=Y_{0}$.
\end{example}

\subsubsection{Funding, arbitrage opportunities, and bankruptcy}
\label{sec:bankruptcy}
The funding strategies proposed in Section \ref{sec:fs_and_process} affect the drift term of the stock price process,
which possibly introduces arbitrage opportunities.
In addition, 
these funding strategies can avoid or lead to {\it bankruptcy},
which is defined in this study
as the event that the corporate stock is worthless $S_{t}=0$.
For instance, Example \ref{ex:repurchasing} with $C=0$ demonstrates that the repurchase plan avoids bankruptcy and yields arbitrage.
In this section, we investigate conditions of the bankruptcy and arbitrage opportunities with and without funding strategies.

We consider three semimartingales: 
the hypothetical stock price process $S$,
the resultant stock price process $Y$,
and the hypothetical stock price process $S$
of \eqref{sde:original} with $\mu=0$.
These processes are assumed to be $(0,\infty)$- or $[0,\infty)$-valued regular diffusions,
where $0$ is assumed to be an absorption boundary if it is accessible,
and $\infty$ is inaccessible.
The laws of these processes as well as the corresponding market models with an arbitrary finite time horizon
are denoted as $P^{f},P^{1/f},Q$, respectively.

Criens and Urusov \cite{criens2024} provided deterministic criteria for the NA, NA1, and NFLVR conditions (see Criens and Urusov \cite[Theorems 3.9 and 3.10, and Corollary 3.11]{criens2024}) in general diffusion market models,
which is shortly reviewed in our specific case as follows.
These criteria consist of two conditions
that are involved with boundary points of the diffusion that is finite.
One criterion is involved with the instantaneous drift $\mu$ and diffusion terms $\sigma$:
\begin{equation}
\int_{0}^{1} yT_{f}(y)^{2} dy
= \int_{0}^{1} y\left(\frac{2\mu(y)}{\sigma(y)^{2}}\right)^{2} dy
< \infty. \label{condition:3.6}
\end{equation}
The other criterion is the accessibility to $0$ with respect to an appropriate measure.
The criteria for the financial market model $P^{f}$ are summarized as follows:
\begin{align}
\mathrm{NA} &\Longleftrightarrow
\text{\eqref{condition:3.6} holds, or $0$ is $Q$-inaccessible;}\\
\mathrm{NA1} &\Longleftrightarrow
\text{\eqref{condition:3.6} holds, or $0$ is $P^{f}$-inaccessible;}\\
\mathrm{NFLVR} &\Longleftrightarrow
\text{\eqref{condition:3.6} holds, or $0$ is $P^{f}$-and $Q$-inaccessible.}
\end{align}
The corresponding conditions for $P^{1/f}$ with funding for the ordinary investors
are obtained by replacing $f$ with $1/f$.
We remark that 
\begin{equation}
\frac{1}{4}\int_{\varepsilon}^{1} yT_{f}(y)^{2} dy
=\frac{1}{4}\int_{\varepsilon}^{1} yT_{1/f}(y)^{2} dy
+\log f(\varepsilon) - \varepsilon \frac{f^{\prime}(\varepsilon)}{f(\varepsilon)} + c \label{eq:dual}
\end{equation}
holds for $\varepsilon>0$, where $c$ is a constant independent of $\varepsilon$.

\paragraph{the repurchase strategy}
Suppose that $f(0)=0$ and $f^{\prime}>0$.
Then, the first assumption shows that $0$ is $P^{1/f}$-inaccessible
and the market model $P^{1/f}$ with funding satisfies the NA1 condition.
The second assumption together with \eqref{eq:dual}
shows that the condition \eqref{condition:3.6} for $1/f$ fails,
according to
\begin{equation}
\frac{1}{4}\int_{0}^{1} yT_{1/f}(y)^{2} dy
\ge -\lim_{\varepsilon \downarrow 0} \log f(\varepsilon)-c
= \infty.
\end{equation}
Then, the NFLVR condition fails in this market model,
in which case the NA condition fails,
if and only if
$0$ is $Q$-accessible; that is,
\begin{equation}
\int_{0}^{1} y \frac{dy}{\sigma(y)^{2}} = \infty \label{condition:3.7}
\end{equation}
fails.

Suppose further that $0$ is $P^{f}$-accessible; that is,
$\int_{0}^{1} \frac{f(y)}{f^{\prime}(y)} \frac{dy}{\sigma(y)^{2}} < \infty$,
which means that the corporate possibly goes bankrupt in the original market model $P^{f}$.
Then, the fundraiser would repurchase the stock.
In this case, he introduces arbitrage opportunities into
the market if and only if the condition \eqref{condition:3.7} fails.
For example, 
the NFLVR condition is satisfied in the market model in Example \ref{ex:besq} with $\delta < 2$ below.

\paragraph{the issuance strategy}
Suppose that $f(0)=\infty$ and $f^{\prime}<0$.
Then, the first assumption shows that $0$ is $P^{f}$-inaccessible
and the market model $P^{f}$ satisfies the NA1 condition.
The second assumption together with \eqref{eq:dual}
shows that the condition \eqref{condition:3.6} for $f$ fails,
according to
\begin{equation}
\frac{1}{4}\int_{0}^{1} yT_{f}(y)^{2} dy
\ge \lim_{\varepsilon \downarrow 0} \log f(\varepsilon) + c
= \infty.
\end{equation}
Then, the NFLVR condition fails if and only if the condition \eqref{condition:3.7} fails, in which case  the NA condition fails.

The fundraiser would issue new stocks if the funding did not yield the possibility of bankrupt, which is equivalent to
$\int_{0}^{1} \frac{1/f(y)}{(1/f)^{\prime}(y)} \frac{dy}{\sigma(y)^{2}} = \infty$.
In this case, the market model $P^{1/f}$ satisfies the NA1 condition.
In addition, this satisfies the NA condition
if and only if 
the condition \eqref{condition:3.6} for $1/f$ 
or \eqref{condition:3.7} is satisfied.
For example, the NFLVR condition is satisfied in the market model $P^{1/f}$ in Example \ref{ex:gbm} with $\alpha < 0$
and not in Example \ref{ex:besq} with $\delta > 2$ below.
In the latter case, however, the bankruptcy possibly occurs.

\vspace{\baselineskip}
\begin{example}
\label{ex:besq}
We consider the case of $\sigma(y)^{2}=(2\sqrt{y})^{2},\mu(y)=\delta \in \mathbb{R} \setminus \{2\}$,
where $S$ is a squared Bessel process with dimension $\delta$:
\begin{align}
S_{t} = S_{0} +2 \int_{0}^{t} \sqrt{S_{u}}d\beta_{u} + \delta t,
\end{align}
and $Y$ is that with dimension $4-\delta$.
We have omitted the case $\delta = 2$, where $S$ is recurrent.
\begin{description}
\item[in case of $\delta < 2$:]
Without repurchasing the stocks,
the stock price $S$ becomes $0$ with finite time with a positive probability.
Then, the fundraiser would decide to repurchase the stocks to avoid a bankruptcy.
As a result, the resultant stock price process $Y$ would be
a squared Bessel process with dimension $4-\delta > 2$.
He could avoid a bankruptcy without introducing arbitrage opportunities; that is, the NFLVR condition holds.
\item[in case of $\delta > 2$:]
If the fundraiser issued new stock according to the proposed funding strategies,
the resultant stock price process $Y$ would be a squared Bessel process with dimension $4-\delta < 2$.
Then, the issuance could lead to a bankruptcy.
\end{description}
\end{example}

\section{Conclusion}
\label{sec:conclusion}
We studied contingent claim valuation under increasing profit, strong arbitrage, and arbitrage of the first kind.
The pricing formula \eqref{eq:main} is an extension of that proposed in the literature for the market model with martingale deflators.
In particular,
this formula corrects the published option pricing formulae,
which are based on the mistaken belief that an ELMM exists.
This correction applies to the financial market model in which the stock price follows a reflected geometric Brownian motion.

Furthermore, we considered financial market models with increasing profit in the context of corporate stock issuance and repurchase plans.
According to the literature, an equilibrium asset price cannot exist in the presence of increasing profit.
However, we consider that economic equilibrium can be derived in such a market model.
This issue is left to future study.

%\bibliographystyle{apalike}
%\bibliographystyle{spmpsci}
%\bibliographystyle{plainnat}
%\bibliographystyle{plain}
%\bibliography{ref}   % name your BibTeX data base

\end{document}